\title{\LARGE \bf
SIRS Epidemics on Complex Networks: Concurrence of Exact Markov Chain and Approximated Models
}
\author{Navid Azizan Ruhi 
\thanks{Department of Electrical Engineering, University of Southern California, Los Angeles, CA 90089, USA
        {\tt\small azizanru at usc.edu}}%
and Babak Hassibi
\thanks{Department of Electrical Engineering, California Institute of Technology, Pasadena, CA 91125, USA
        {\tt\small hassibi at caltech.edu}}%
}
\begin{document}

\newtheorem{theorem}{Theorem}[section]
\newtheorem{lemma}[theorem]{Lemma}
\newtheorem{proposition}[theorem]{Proposition}
\newtheorem{corollary}[theorem]{Corollary}

\maketitle
\thispagestyle{empty}
\pagestyle{empty}

\begin{abstract}

We study the SIRS (Susceptible-Infected-Recovered-Susceptible) spreading processes over complex networks, by considering its exact $3^n$-state Markov chain model. The Markov chain model exhibits an interesting connection with its $2n$-state nonlinear ``mean-field'' approximation and the latter's corresponding linear approximation. We show that under the specific threshold where the disease-free state is a globally stable fixed point of both the linear and nonlinear models, the exact underlying Markov chain has an $O(\log n)$ mixing time, which means the epidemic dies out quickly. In fact, the epidemic eradication condition coincides for all the three models. Furthermore, when the threshold condition is violated, which indicates that the linear model is not stable, we show that there exists a unique second fixed point for the nonlinear model, which corresponds to the endemic state. We also investigate the effect of adding immunization to the SIRS epidemics by introducing two different models, depending on the efficacy of the vaccine. Our results indicate that immunization improves the threshold of epidemic eradication. Furthermore, the common threshold for fast-mixing of the Markov chain and global stability of the disease-free fixed point improves by the same factor for the vaccination-dominant model.
\end{abstract}

\section{Introduction}

Epidemic models have been extensively studied since a first mathematical formulation was introduced in 1927 by Kermack and McKendrick \cite{kermack1927contribution}. Though initially proposed to understand the spread of contagious diseases \cite{bailey1975mathematical}, the study of epidemics applies to many other areas, such as network security \cite{alpcan2010network,acemoglu2013network}, viral advertising \cite{phelps2004viral,richardson2002mining}, and information propagation \cite{jacquet2010information,cha2009measurement}. Questions of interest include the existence of fixed-points, stability (does the epidemic die out), transient behavior, the cost of an epidemic, how best to control an epidemic, etc.

We consider the spread of an epidemic over a network using the SIRS (susceptible-infected-recovered-susceptible) model where each node can be in one of three states (e.g. as in \cite{zhang2008global}, but network-based). During each time epoch, nodes in the susceptible state can be infected by their infected neighbors according to independent events with probability $\beta$ (the {\em infection rate}) each. Nodes that are infected, during each such time epoch can recover with probability $\delta$ (the {\em recovery rate}) and, finally, nodes in the recovered state can randomly transition to the susceptible state with probability $\gamma$. We will also consider a model which allows for random vaccinations (with probability $\theta$) that permits direct transition from the susceptible state to the recovered one. In its entirety, for a network with $n$ nodes, this yields a Markov chain with $3^n$ states. Ostensibly, because analyzing this Markov chain is too complicated,
various $2n$-dimensional linear and non-linear approximations have been proposed. The most common of these are the $2n$-dimensional non-linear mean-field approximation, and its corresponding linearization about the disease-free fixed point.

Our paper generalizes the analysis of \cite{ahn2013global,ahn2014mixing}, that was concerned with SIS (susceptible-infected-susceptible) models, to the more realistic SIRS case. As in \cite{ahn2013global,ahn2014mixing}, we provide a complete global analysis of the epidemic dynamics for the nonlinear mean-field model. In particular, we show that depending on the largest eigenvalue of the underlying graph adjacency matrix and the ratio of the infection and recovery rates, the global dynamics takes on one of two forms: either the epidemic does out, or it converges to another unique fixed point (the so-called endemic state where a constant fraction of the nodes remain infected). Finally, we tie in these results with the ``true'' underlying Markov chain model and show that the global stability of the $2n$-dimensional approximate models is related to whether the Markov chain is ``fast-mixing'' or not. 

Our paper focuses on discrete-time models (Markov chains and their low-dimensional discrete-time approximations). Continuous-time-discrete-space models, called continuous-time Markov chains have been studied by Draief, Ganesh et al. \cite{draief2006thresholds} and Mieghem et al. \cite{van2009virus}. Continuous-time mean-field approximations for such models have been studied in \cite{shuai2013global,khanafer2014stability,fall2007epidemiological}, where using techniques from Lyapunov theory and the theory of positive systems (and somewhat different from those used here and in 
\cite{ahn2013global,ahn2014mixing}), global stability results and thresholds are obtained for the disease-free and endemic states. However, contrary to the current paper, none of these make an explicit connection to the mixing time of the underlying Markov process.

\section{Model Description}

\subsection{Exact Markov Chain Model}
We start with the exact Markov chain model. For the connected network $G$ with adjacency matrix $A$, let $n$ represent the number of nodes, and $N_i$ the set of neighbors of node $i$. The state of node $i$ at time $t$, denoted by $\xi_i(t)$, can take one of the following values: $0$ for \emph{Susceptible} (or healthy), $1$ for \emph{Infected} (or Infectious), and $2$ for \emph{Recovered}. i.e. $\xi_i(t) \in \left\{0,1,2\right\}$. Fig. \ref{fig} shows the three states and the corresponding transitions. $\beta$ is the transmission probability on each link, $\delta$ is the healing probability, and $\gamma$ is the immunization loss probability. The state of the whole network can be represented as:
\begin{equation}
\xi(t)=(\xi_i(t),\dots,\xi_n(t)) \in \left\{0,1,2\right\}^n
\end{equation}
\begin{figure}[tpb]
  \centering
    \includegraphics[width=0.7\columnwidth]{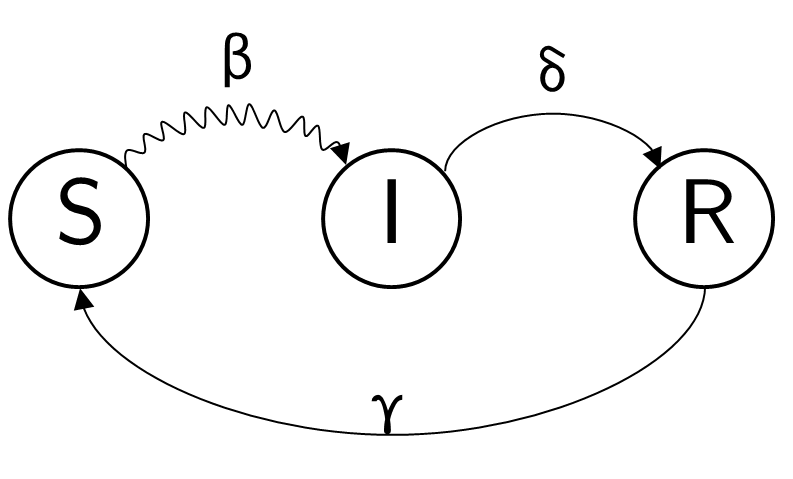}
      \caption{State diagram of a single node in SIRS model, and the transition rates. Wavy arrow represents exogenous (network-based) transition. $S$ is healthy but can get infected, $I$ is infected, $R$ is healthy but cannot get infected.}
      \label{fig}
\end{figure}

Furthermore, let $S$ denote the $3^n\times 3^n$ state transition matrix of the Markov chain, with elements of the form:
\begin{align}\label{MC1}
S_{X,Y}&= \mathbb{P}\left\{\xi(t+1)=Y \mid \xi(t) = X\right\}\notag\\
 &= \prod_{i=1}^n \mathbb{P}\left\{\xi_i(t+1)=Y_i \mid \xi(t) = X\right\} ,
\end{align}
due to the independence of the next states given the current state.
\begin{multline}\label{MC2}
\mathbb{P}\left\{\xi_i(t+1)=Y_i \mid \xi(t) = X\right\}=\\
\begin{cases}
(1-\beta)^{m_i},& \text{if } (X_i,Y_i)=(0,0)\\
1-(1-\beta)^{m_i},& \text{if } (X_i,Y_i)=(0,1)\\
0,& \text{if } (X_i,Y_i)=(0,2)\\
0,& \text{if } (X_i,Y_i)=(1,0)\\
1-\delta,& \text{if } (X_i,Y_i)=(1,1)\\
\delta,& \text{if } (X_i,Y_i)=(1,2)\\
\gamma,& \text{if } (X_i,Y_i)=(2,0)\\
0,& \text{if } (X_i,Y_i)=(2,1)\\
1-\gamma,& \text{if } (X_i,Y_i)=(2,2)\\
\end{cases} ,
\end{multline}
where $m_i=\left\vert{\left\{ {j\in N_i} \mid X_j=1\right\}}\right\vert=\left\vert{N_i\cap I(t)}\right\vert$. The set of susceptible, infected, and recovered nodes at time $t$ are denoted as $S(t)$, $I(t)$, and $R(t)$ respectively.

We state the marginal probability of the nodes as $p_{R,i}(t)$ and $p_{I,i}(t)$, for the probability that \emph{node $i$ is in state $R$ at time $t$} and the probability that \emph{node $i$ is in state $I$ at time $t$}, respectively. Then $p_{S,i}(t)$ follows immediately as $1-p_{R,i}(t)-p_{I,i}(t)$. Based on the abovementioned transition rates, we can calculate these marginal probabilities as:
\begin{equation}
\begin{split}
p_{S,i}(t+1) =&\mathbb{P}\left\{i\in S(t+1) \mid i\in S(t)\right\}p_{S,i}(t)\\
&+ \mathbb{P}\left\{i\in S(t+1) \mid i\in R(t)\right\}p_{R,i}(t) ,
\end{split}
\end{equation}
\begin{equation}
\begin{split}
p_{I,i}(t+1) =&\mathbb{P}\left\{i\in I(t+1) \mid i\in I(t)\right\}p_{I,i}(t)\\
&+ \mathbb{P}\left\{i\in I(t+1) \mid i\in S(t)\right\}p_{S,i}(t) ,
\end{split}
\end{equation}
\begin{equation}
\begin{split}
p_{R,i}(t+1) =&\mathbb{P}\left\{i\in R(t+1) \mid i\in R(t)\right\}p_{R,i}(t)\\
&+ \mathbb{P}\left\{i\in R(t+1) \mid i\in I(t)\right\}p_{I,i}(t) ,
\end{split}
\end{equation}
which yeids:
\begin{empheq}[box=\fbox]{align}
p_{R,i}&(t+1) =(1-\gamma)p_{R,i}(t)+\delta p_{I,i}(t) ,\label{exact_R}\\
p_{I,i}&(t+1) =(1-\delta)p_{I,i}(t)\notag\\
&+(1-(1-\beta)^{m_i})(1-p_{R,i}(t)-p_{I,i}(t)) ,\label{exact_I}
\end{empheq}
and
\begin{equation}\label{exact_S}
p_{S,i}(t+1) =(1-\beta)^{m_i} (1-p_{R,i}(t)-p_{I,i}(t))+\gamma p_{R,i}(t) .
\end{equation}
This is consistent with the fact that $p_{S,i}(t)+p_{I,i}(t)+p_{R,i}(t)=1$ for all $t$.

\subsection{Nonlinear Model}
One may consider the mean-field approximation of the above marginal probabilities, which can be expressed as:
\begin{empheq}[box=\fbox]{align}
P&_{R,i}(t+1) =(1-\gamma)P_{R,i}(t)+\delta P_{I,i}(t) ,\label{nonlinear_R}\\
P&_{I,i}(t+1) =(1-\delta)P_{I,i}(t)+\notag\\
&(1-\prod_{j\in N_i} (1-\beta P_{I,j}(t)))(1-P_{R,i}(t)-P_{I,i}(t)) ,\label{nonlinear_I}
\end{empheq}
and
\begin{multline}\label{nonlinear_S}
P_{S,i}(t+1)=\prod_{j\in N_i} (1-\beta P_{I,j}(t)) (1-P_{R,i}(t)-P_{I,i}(t))\\
+\gamma P_{R,i}(t) .
\end{multline}
We use capital $P$ for the approximated probabilities, to distinguish them from the exact probabilities of the Markov chain, $p$. This approximate model is in fact a nonlinear mapping with $2n$ states (rather than $3^n$ states).

\subsection{Linear Model}
One step further would be to approximate the preceding equations by a linear model. Linearizing Eqs. \eqref{nonlinear_R} and \eqref{nonlinear_I} around the origin results in the following mapping:
\begin{align}
\tilde{P}_{R,i}(t+1) &=(1-\gamma)\tilde{P}_{R,i}(t)+\delta \tilde{P}_{I,i}(t) ,\\
\tilde{P}_{I,i}(t+1) &=(1-\delta)\tilde{P}_{I,i}(t)+\beta\sum\limits_{j\in N_i} \tilde{P}_{I,j} .
\end{align}
These equations (for all $i$) can be expressed in a matrix form:
\begin{empheq}[box=\fbox]{align}\label{linear}
&\begin{bmatrix}\tilde{P}_R(t+1)\\\tilde{P}_I(t+1)\end{bmatrix}=M \begin{bmatrix}\tilde{P}_R(t)\\\tilde{P}_I(t)\end{bmatrix} ,\\
&\text{where}\notag\\
&M = \begin{bmatrix}
(1-\gamma)I_n & \delta I_n\\
0_{n\times n} & (1-\delta)I_n+\beta A
\end{bmatrix} .
\end{empheq}

\section{Epidemic Eradication ($\frac{\beta\lambda_{\max}(A)}{\delta}<1$)}

\subsection{The Trivial Fixed Point of the Map and Steady State of the MC}
The origin is trivially a fixed point of both the linear (Eq. \ref{linear}) and nonlinear (Eqs. \ref{nonlinear_R} and \ref{nonlinear_I}) mappings. In fact, at this fixed point we have:
$$[P_{R,1}(t),  \dots, P_{R,n}(t), P_{I,1}(t), \dots, P_{I,n}(t)]^T= 0_{2n} ,$$
which means all the nodes are susceptible (healthy) with probability 1, and the system stays there permanently, because there are no infected nodes anymore.

More importantly, since the graph $G$ is connected, the Markov chain is irreducible, and it can be seen that its (unique) stationary distribution is:
$$\pi=e_{\bar{0}} ,$$
where $e_X \in \mathbb{R}^{3^n}$ denotes the probability vector with all elements of zero, except the $X$-th one. This means that the steady state of the Markov chain model is $\xi=(0,0,\dots,0) = \bar{0}$, which coincides with the abovementioned fixed point of the mappings. However, the main concern is whether the Markov chain converges to its stationary distribution within a ``reasonable amount of time,'' or not.

\subsection{Stability of the Trivial Fixed Point}
Clearly, if $\|M\|<1$, then the origin is globally stable for the linear model (\ref{linear}) and also locally stable for the nonlinear model (\ref{nonlinear_I}, \ref{nonlinear_R}). The eigenvalues of $M$ matrix consist of the eigenvalues of $(1-\gamma)I_n$ and the eigenvalues of $(1-\delta)I_n+\beta A$. Noticing that the eigenvalues of $(1-\gamma)I_n$ are always less than one, it can be concluded that $\|M\|<1$ if the largest eigenvalue of $(1-\delta)I_n+\beta A$ is less than one.

In addition, the linear model (\ref{linear}) is an upperbound on the nonlinear model (\ref{nonlinear_R}, \ref{nonlinear_I}), i.e.
\begin{multline}
P_{I,i}(t+1) =(1-\delta)P_{I,i}(t)\\
+(1-\prod_{j\in N_i} (1-\beta P_{I,j}(t)))(1-P_{R,i}(t)-P_{I,i}(t))\\
\leq (1-\delta)P_{I,i}(t)+\beta\sum\limits_{j\in N_i} P_{I,j} ,
\end{multline}
and consequently
\begin{equation}
P_I(t+1)\preceq((1-\delta)I_n+\beta A)P_I(t) ,
\end{equation}
where $\preceq$ denotes that the inequality holds element-wise for all the elements. This concludes the following result.
\begin{proposition}
If $\frac{\beta\lambda_{\max}(A)}{\delta}<1$, then the origin is a globally stable fixed point for both linear model (\ref{linear}) and nonlinear model (\ref{nonlinear_R}, \ref{nonlinear_I}).
\end{proposition}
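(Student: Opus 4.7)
The plan is to reduce both claims to the single spectral statement that the matrix $B := (1-\delta)I_n + \beta A$ has spectral radius strictly less than one, which is precisely what the hypothesis $\beta\lambda_{\max}(A)/\delta<1$ encodes: since $A$ is symmetric, $\rho(B) = 1-\delta+\beta\lambda_{\max}(A)$, and the condition says this quantity is below one.

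For the linear model, I would first exploit that $M$ is block upper triangular, so $\sigma(M) = \sigma((1-\gamma)I_n) \cup \sigma(B)$. The first block contributes only the eigenvalue $1-\gamma \in [0,1)$, and under the hypothesis $\rho(B)<1$. Hence $\rho(M)<1$, which yields global exponential stability of the origin for \eqref{linear}.

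For the nonlinear model, I would couple two ingredients. First, the elementwise domination $P_I(t+1) \preceq B\,P_I(t)$ already derived in the excerpt, combined with nonnegativity $P_I(t) \succeq 0$, iterates to $0 \preceq P_I(t) \preceq B^t P_I(0)$; since $\rho(B)<1$, this forces $P_I(t)\to 0$ geometrically from any feasible initial condition. I would then treat \eqref{nonlinear_R} as a stable scalar cascade driven by the vanishing input $\delta P_{I,i}(t)$: the closed-form solution
$$P_{R,i}(t) = (1-\gamma)^t P_{R,i}(0) + \delta \sum_{s=0}^{t-1}(1-\gamma)^{t-1-s} P_{I,i}(s)$$
is a convolution with summable geometric kernel and vanishing input, so $P_{R,i}(t)\to 0$ as well.

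The hard part, such as it is, lies less in the spectral computation or in the comparison bound than in two supporting facts that must not be glossed over: (i) forward-invariance of the probability simplex $\{P_R,P_I \ge 0,\; P_R+P_I \le 1\}$ under the nonlinear map, which is needed so that $P_I(t) \succeq 0$ and the monotone comparison applies throughout; and (ii) the cascade argument for $P_R$, whose dynamics are not themselves constrained by the threshold condition and inherit decay only indirectly through $P_I$. Once these are in place, the remainder is routine bookkeeping.
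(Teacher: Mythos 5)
Your proposal is correct and follows essentially the same route as the paper: the block upper-triangular spectral decomposition of $M$ for the linear model, and the elementwise domination $P_I(t+1)\preceq\bigl((1-\delta)I_n+\beta A\bigr)P_I(t)$ to transfer that stability to the nonlinear model. You are in fact somewhat more explicit than the paper, which leaves the decay of $P_R$ (your cascade/convolution step) and the forward-invariance of the probability simplex implicit.
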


\subsection{Mixing Time of the MC}
We further show that when $\frac{\beta\lambda_{\max}(A)}{\delta}<1$, not only are the linear and nonlinear maps globally stable at the origin, but also the mixing time of the Markov chain is $O(\log n )$, meaning that the Markov chain mixes fast and the epidemic dies out. This result has been shown for the simpler case of SIS model in \cite{ahn2014mixing}.

Let the row vector $\mu(t)\in \mathbb{R}^{3^n}$ be the probability vector of the Markov chain. The relationship between these probabilities ($\mu_X(t)$) and the marginal probabilities ($p_{R,i}(t)$, $p_{I,i}(t)$) is in the following forms: $p_{R,i}(t)=\sum_{X_i=2} \mu_X(t)$, $p_{I,i}(t)=\sum_{X_i=1} \mu_X(t)$. We express all these terms as well as $p_0=\sum \mu_X(t)=1$ in the form of a column vector $p(t)=[p_0(t),p_1(t),\dots,p_{2n}]^T$. i.e.
\begin{equation}
p(t)=\begin{bmatrix} 1, \vline p_{R,1}(t), \dots,p_{R,n}(t), \vline p_{I,1}(t), \dots, p_{I,n}(t)\end{bmatrix}^T .
\end{equation}
The matrix $B \in \mathbb{R}^{3^n\times (2n+1)}$ which relates the \emph{observable data}, $p(t)$, and the hidden \emph{complete data}, $\mu(t)$, can be expressed as:
\begin{equation}
B_{X,k}=
\begin{cases}
1,& \text{if } k=0\\ \hdashline
0,& \text{if } k\in\left\{1,2,\dots,n\right\} \text{ and } X_k=0\\
0,& \text{if } k\in\left\{1,2,\dots,n\right\} \text{ and } X_k=1\\
1,& \text{if } k\in\left\{1,2,\dots,n\right\} \text{ and } X_k=2\\ \hdashline
0,& \text{if } k\in\left\{n+1,n+2,\dots,2n\right\} \text{ and } X_{k-n}=0\\
1,& \text{if } k\in\left\{n+1,n+2,\dots,2n\right\} \text{ and } X_{k-n}=1\\
0,& \text{if } k\in\left\{n+1,n+2,\dots,2n\right\} \text{ and } X_{k-n}=2\\
\end{cases}
\end{equation}

Now we are ready to proceed to the main theorem of this section.
\begin{theorem}\label{thm_mixing}
If $\frac{\beta\lambda_{\max}(A)}{\delta}<1$, the mixing time of the Markov chain whose transition matrix $S$ is described by Eqs. \eqref{MC1} and \eqref{MC2} is $O(\log n)$.
\end{theorem}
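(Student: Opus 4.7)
The plan is to bound the marginal probabilities of being non-susceptible by iterates of the linear map $M$, exploit that $\rho(M)<1$ under the hypothesis to obtain exponential decay, and then convert this into a total-variation bound to the stationary distribution $\pi=e_{\bar 0}$ through a union bound over nodes.

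First I would revisit (\ref{exact_R})--(\ref{exact_I}). Equation (\ref{exact_R}) is already a true equality. In (\ref{exact_I}) the factor $(1-(1-\beta)^{m_i})$ is a function of the random neighbor count $m_i$; taking expectations and applying the elementary bound $1-(1-\beta)^{m_i}\le \beta m_i$ together with $1-p_{R,i}(t)-p_{I,i}(t)\le 1$ yields
\[
p_{I,i}(t+1)\le (1-\delta)p_{I,i}(t)+\beta\sum_{j\in N_i}p_{I,j}(t).
\]
Combined with (\ref{exact_R}) this gives the coordinate-wise dominance $[p_R(t+1)^T,p_I(t+1)^T]^T\preceq M[p_R(t)^T,p_I(t)^T]^T$, with $M$ as in (\ref{linear}). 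Because $M$ has nonnegative entries, iteration preserves the inequality, so $[p_R(t)^T, p_I(t)^T]^T \preceq M^t [p_R(0)^T, p_I(0)^T]^T$ for every $t\ge 0$.

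Next, the eigenvalues of the block-triangular $M$ are those of $(1-\gamma)I_n$ and of $(1-\delta)I_n+\beta A$, so under the hypothesis one has $\rho:=\rho(M)=\max(1-\gamma,\,1-\delta+\beta\lambda_{\max}(A))<1$. A direct computation of $M^t$ from its block-triangular form shows that every entry is bounded by a polynomial-in-$t$ multiple of $\rho^t$; in particular every entry is $O(\tilde\rho^t)$ for any $\tilde\rho\in(\rho,1)$. Since $\pi$ is concentrated on $\bar 0$,
\[
\|\mu(t)-\pi\|_{TV}=1-\mu_{\bar 0}(t)=\mathbb{P}(\xi(t)\neq\bar 0)\le\sum_{i=1}^n\bigl(p_{I,i}(t)+p_{R,i}(t)\bigr)
\]
by a union bound over nodes, and any initial configuration satisfies $\sum_i(p_{I,i}(0)+p_{R,i}(0))\le n$. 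Combining these estimates with the dominance by $M^t$ gives $\|\mu(t)-\pi\|_{TV}\le c\,n^2\,\tilde\rho^{\,t}$ for some constant $c>0$ depending on $\beta,\delta,\gamma,\lambda_{\max}(A)$ but not on $n$. Setting this below a constant threshold $\epsilon$ yields $t_{mix}(\epsilon)\le \log(cn^2/\epsilon)/\log(1/\tilde\rho)=O(\log n)$, which is the claim.

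The main obstacle is the very first step. The neighbor count $m_i$ in (\ref{exact_I}) is genuinely random and couples node $i$ to the joint state of its entire neighborhood, so a naive marginalization would introduce higher-order dependencies among the $p_{I,j}$ that cannot be collapsed into a linear recursion. What rescues the argument is that the linearization $1-(1-\beta)^{m_i}\le \beta m_i$ holds \emph{pointwise} on the sample space and therefore survives the expectation, producing a coordinate-wise (not merely average) upper bound by the matrix $M$; without this the comparison to $M^t$ would fail. Once this dominance is in place, the remainder is routine spectral bookkeeping. The argument parallels the SIS treatment in \cite{ahn2014mixing}; the only genuinely new ingredient is the benign recovered compartment, which adds the block $(1-\gamma)I_n$ to $M$ and affects constants but not the threshold.
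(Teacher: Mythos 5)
Your proposal is correct, and the overall skeleton matches the paper's: dominate the exact marginals coordinate-wise by the linear map $M$ of \eqref{linear}, iterate, and convert exponential decay into a total-variation bound via $\|\mu(t)-\pi\|_{TV}=1-\mu_{\bar 0}(t)$ and a union bound over nodes. Where you genuinely diverge is in how the key inequality $p_{I,i}(t+1)\leq(1-\delta)p_{I,i}(t)+\beta\sum_{j\in N_i}p_{I,j}(t)$ for the \emph{exact} chain is obtained. The paper proves it by a linear-programming duality argument: it maximizes $p_i(t+1)$ over all joint distributions $\mu$ consistent with the marginals ($\mu B=p^T$, $\mu\succeq 0$), passes to the dual, and exhibits an explicit feasible multiplier $\lambda^*$ whose value is exactly $(1-\delta)p_i+\beta\sum_{j\in N_i}p_j$. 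You instead condition on the full configuration and use the pointwise bound $1-(1-\beta)^{m_i}\leq\beta m_i$ together with dropping the indicator $\mathbb{I}(\xi_i(t)=0)\leq 1$ before taking expectations; this is more elementary and avoids the LP machinery entirely (your closing remark correctly identifies that the pointwise validity of the linearization is what lets the bound survive the expectation — strictly, one should phrase the step at the level of indicators rather than ``taking expectations'' of \eqref{exact_I}, since \eqref{exact_I} as printed already hides an expectation over the random $m_i$, but your argument is the right one). The paper's LP technique buys generality: the same dual construction is reused verbatim for the vaccination models (Theorems \ref{thm_mixing_id} and the vaccination-dominant analogue), where the transition probabilities change but only the dual feasibility checks need updating. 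Your route buys transparency, and in one respect it is actually tighter than the paper: since $M$ is not symmetric, the hypothesis only guarantees $\rho(M)<1$, not $\|M\|<1$ (e.g.\ the top block row alone forces $\|M\|\geq\sqrt{(1-\gamma)^2+\delta^2}$, which can exceed $1$), so the paper's final step $1_{2n}^TM^t1_{2n}\leq 2n\|M\|^t$ with ``$\|M\|<1$'' needs the kind of repair you supply — bounding the entries of the block-triangular $M^t$ by $C\,\tilde\rho^{\,t}$ for any $\tilde\rho\in(\rho(M),1)$ with $C$ independent of $n$ — which still yields $t_{mix}(\epsilon)=O(\log n)$.
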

\begin{proof}
First we use a linear programming technique to show for each $i\in\left\{n+1,n+2,\dots,2n\right\}$, we have $p_i(t+1) \leq (1-\delta)p_i(t) + \beta\sum\limits_{j\in N_i}p_j(t)$.
Let $f_i\in \mathbb{R}^{2n+1}$ represent the $i^{th}$ unit column vector.
For the sake of convenience, let us drop the time index $(t)$.
\begin{align}
\max_{\mu B=p^T, \mu\succeq 0} p_i(t+1) &= \max_{\mu B=p^T, \mu\succeq 0} \mu SBf_i\\
 &= \max_{\mu\succeq 0} \min_\lambda \mu SBf_i-(\mu B-p^T)\lambda\\ \label{minmax}
 &= \min_\lambda \max_{\mu\succeq 0} \mu(SBf_i-B\lambda) +p^T\lambda ,
\end{align}
where $\lambda\in\mathbb{R}^{2n+1}$ is a column vector. If any element of $(SBf_i-B\lambda)$ is strictly positive, it leads to $\max_{\mu\succeq 0} \mu(SBf_i-B\lambda)=+\infty$. Therefore:
\begin{equation}\label{leq}
SBf_i-B\lambda\preceq 0 .
\end{equation}
Now we proceed with further calculation of $SBf_i$ and $B\lambda$.
\begin{align}
&(SBf_i)_X = (SB)_{X,i} = \sum_{Y\in\left\{0,1,2\right\}^n} S_{X,Y}B_{Y,i}\\
 &= \begin{cases} \mathbb{P}\left\{ Y_i=2 \mid X\right\},& i\in\left\{1,2,\dots,n\right\}\\ \mathbb{P}\left\{ Y_{i-n}=1 \mid X\right\},& i\in\left\{n+1,n+2,\dots,2n\right\}\end{cases}\\
 &= \begin{cases}
0,& \text{if } i\in\left\{1,2,\dots,n\right\} \text{ and } X_i=0\\
\delta,& \text{if } i\in\left\{1,2,\dots,n\right\} \text{ and } X_i=1\\
1-\gamma,& \text{if } i\in\left\{1,2,\dots,n\right\} \text{ and } X_i=2\\
1-(1-\beta)^{m_{i-n}},& \text{if } i\in\left\{n+1,\dots,2n\right\} \text{ and } X_{i-n}=0\\
1-\delta,& \text{if } i\in\left\{n+1,\dots,2n\right\} \text{ and } X_{i-n}=1\\ \label{SBf_i}
0,& \text{if } i\in\left\{n+1,\dots,2n\right\} \text{ and } X_{i-n}=2\\
\end{cases}\\ \label{Bl}
&(B\lambda)_X =\lambda_0 + \sum\limits_{k=1}^n B_{X,k}\lambda_k + \sum\limits_{k=n+1}^{2n} B_{X,k}\lambda_k .
\end{align}
As mentioned earlier, we want to evaluate $p_i(t+1)$ only for $i\in\left\{n+1,n+2,\dots,2n\right\}$.
Define $\hat{i} \in \left\{0,1,2\right\}^n$ as the state where only $i$ is infected, and the rest are susceptible. Trying several $X$ in \eqref{leq} using \eqref{SBf_i} and \eqref{Bl} yields:
\begin{equation}
\begin{cases}
X=\bar{0}, &\lambda_0+0+0 \geq 0\\
X=\bar{2}, &\lambda_0+\sum\limits_{k=1}^n \lambda_k+0 \geq 0\\
X=\hat{i}, &\lambda_0+0+\lambda_{n+i}\geq 1-\delta\\
X=\hat{j}, j\in N_i, &\lambda_0+0+\lambda_{n+j}\geq \beta\\
X=\hat{j}, j\not\in N_i, &\lambda_0+0+\lambda_{n+j}\geq 0
\end{cases}
\end{equation}
Now we claim that $\lambda^*=[\lambda_0^*,\lambda_1^*,\dots,\lambda_{2n}^*]^T$ defined by the following values is in the feasible set:
\begin{equation}
\begin{cases}
\lambda_0^*=0\\
\lambda_1^*=\dots=\lambda_n^*=0\\
\lambda_{n+i}^*=1-\delta\\
\lambda_{n+j}=\beta \text{  for } j\in N_i\\
\lambda_{n+j}=0 \text{  for } j\not\in N_i
\end{cases}
\end{equation}
We verify the claim for all possible cases as the following.\\
For $X_i=0, \left\vert{N_i\cap I(t)}\right\vert=m$ :
\begin{multline}
\mathbb{P}\left\{ Y_i=1 \mid X\right\}=1-(1-\beta)^m \leq\\
m\beta = \lambda_0^* + \sum\limits_{k=1}^n B_{X,k}\lambda_k^* + \sum\limits_{k=n+1}^{2n} B_{X,k}\lambda_k^* .
\end{multline}
For $X_i=1, \left\vert{N_i\cap I(t)}\right\vert=m$ :
\begin{multline}
\mathbb{P}\left\{ Y_i=1 \mid X\right\}=1-\delta \leq\\
1-\delta+m\beta = \lambda_0^* + \sum\limits_{k=1}^n B_{X,k}\lambda_k^* + \sum\limits_{k=n+1}^{2n} B_{X,k}\lambda_k^* .
\end{multline}
For $X_i=2, \left\vert{N_i\cap I(t)}\right\vert=m$ :
\begin{multline}
\mathbb{P}\left\{ Y_i=1 \mid X\right\}=0 \leq\\
m\beta = \lambda_0^* + \sum\limits_{k=1}^n B_{X,k}\lambda_k^* + \sum\limits_{k=n+1}^{2n} B_{X,k}\lambda_k^* .
\end{multline}
It follows that $\lambda^*$ is in the feasible set. Back to the Eq. \eqref{minmax} we have:
\begin{align}
\max_{\mu B=p^T, \mu\succeq 0} p_i(t+1) &= \min_\lambda \max_{\mu\succeq 0} \mu(SBf_i-B\lambda) +p^T\lambda\notag\\
 &\leq p^T\lambda^* = (1-\delta)p_i +\beta\sum\limits_{j\in N_i}p_j ,
\end{align}
which proves:
\begin{equation}
p_{I,i}(t+1) \leq (1-\delta)p_{I,i}(t) + \beta\sum\limits_{j\in N_i}p_{I,j}(t) .
\end{equation}
Moreover, we already know that $p_{R,i}(t+1)=(1-\gamma)p_{R,i}(t)+\delta p_{I,i}(t)$ (Eq. \ref{exact_R}), and all the equations can be expressed in a vector form, using $p_R=[p_{R,1}(t),p_{R,2}(t),\dots,p_{R,n}(t)]^T$ and $p_I=[p_{I,1}(t),p_{I,2}(t),\dots,p_{I,n}(t)]^T$:
\begin{align}
\begin{bmatrix}p_R\\p_I\end{bmatrix}(t+1)&\preceq\begin{bmatrix}
(1-\gamma)I_n & \delta I_n\\
0_n & (1-\delta)I_n+\beta A
\end{bmatrix} \begin{bmatrix}p_R\\p_I\end{bmatrix}(t)\\
&= M \begin{bmatrix}p_R\\p_I\end{bmatrix}(t). \notag
\end{align}

The definition of the mixing time \cite{levin2009markov} is:
\begin{equation}
t_{mix}(\epsilon)=\min\left\{t : \sup_\mu \|\mu S^t-\pi\|_{TV} \leq \epsilon\right\} .
\end{equation}
More specifically we have:
\begin{align}
\|\mu S^t-\pi\|_{TV} &= \frac{1}{2}\sum\limits_X \lvert(\mu S^t)_X - \pi_X\rvert\\
&= \frac{1}{2}\sum\limits_X \lvert(\mu S^t)_X - (e_{\bar{0}})_X\rvert\\
&= \frac{1}{2}\big( 1-(\mu S^t)_{\bar{0}}\big) + \frac{1}{2}\sum\limits_{X\neq \bar{0}} (\mu S^t)_X\\
&= \frac{1}{2}\big( 1-(\mu S^t)_{\bar{0}}\big) + \frac{1}{2}\big( 1-(\mu S^t)_{\bar{0}}\big)\\
&= 1-(\mu S^t)_{\bar{0}}\\
&= 1-\mu S^t e_{\bar{0}}^T\\
&\leq 1-e_{\bar{1}} S^t e_{\bar{0}}^T  .
\end{align}
Hence, for any $t< t_{mix}(\epsilon)$:
\begin{align}
\epsilon &< 1-\mathbb{P}\left\{ \substack{\text{all nodes are susceptible at time $t$} \mid \\\text{all nodes were infected at time $0$}}\right\}\\
 &=\mathbb{P}\left\{ \substack{\text{some nodes are infected or recovered at time $t$} \mid \\\text{all nodes were infected at time $0$}} \right\}\\
 &\leq \sum\limits_{i=1}^n (p_{I,i}(t)+p_{R,i}(t)) = 1_{2n}^T \begin{bmatrix}p_R\\p_I\end{bmatrix}(t)\\
 &\leq 1_{2n}^T M^t \begin{bmatrix}p_R\\p_I\end{bmatrix}(0)\\
 &\leq 1_{2n}^T M^t 1_{2n}\\
 &\leq \|1_{2n}\|^2 \|M\|^t\\
 &=2n\|M\|^t .
\end{align}
$\|M\|<1$ leads to the fact that $t< \frac{\log \frac{2n}{\epsilon}}{-\log \|M\|}$ for all $t< t_{mix}(\epsilon)$. Therefore $t_{mix}(\epsilon)\leq \frac{\log \frac{2n}{\epsilon}}{-\log \|M\|}$, which means the mixing time is $O(\log n)$.
\end{proof}

\section{Epidemic Spread ($\frac{\beta\lambda_{\max}(A)}{\delta}>1$)}
\subsection{Existence and Uniqueness of Nontrivial Fixed Point}
The trivial fixed point of the mappings, the origin, is not stable if $(1-\delta)+\beta\lambda_{\max}(A)>1$. Moreover, it is not clear in general whether there exists any other fixed point, or how many fixed points exist if so. However, it has been proved in \cite{ahn2013global} that for SIS model indeed there exists a unique nontrivial fixed point when $(1-\delta)+\beta\lambda_{\max}(A)>1$. In this section we extend this result to the more general case of SIRS model.

By rearranging Eq. \eqref{nonlinear_I}, we can rewrite the system equations as:
\begin{empheq}[left=\empheqlbrace]{align}
P_{R,i}(t+1)=&(1-\gamma)P_{R,i}(t)+\delta P_{I,i}(t)\label{nonlinear_R_new}\\
P_{I,i}(t+1)=&P_{I,i}(t)+(1-P_{R,i}(t)-P_{I,i}(t))\notag\\
&\cdot\big(\Xi_i(P_I(t))-\omega(P_{R,i}(t),P_{I,i}(t))\big) ,\label{nonlinear_I_new}
\end{empheq}
where $\Xi_i \colon [0,1]^n \to [0,1]$ and $\omega \colon [0,1]^2 \to \mathbb{R}^+$ are the following maps associated with network $G$:
\begin{equation}
\Xi_i(P_I(t))=1-\prod_{j\in N_i} (1-\beta P_{I,j}(t)) ,
\end{equation}
\begin{equation}
\omega(P_{R,i}(t),P_{I,i}(t))=\frac{\delta P_{I,i}(t)}{1-P_{R,i}(t)-P_{I,i}(t)} .
\end{equation}

It can be verified that the maps defined above, enjoy the following properties:
\begin{enumerate}[label=(\alph*)]
\item $\Xi_i(0_n)=0$\\ $\frac{\partial \Xi_i(P_I)}{\partial P_{I,j}}\bigg|_{0_n}=\beta A_{i,j}$
\item $\begin{cases}\frac{\partial \Xi_i(P_I)}{\partial P_{I,j}}>0 &\mbox{if } i\in N_j\\ \frac{\partial \Xi_i(P_I)}{\partial P_{I,j}}=0 &\mbox{if } i\not\in N_j\end{cases}$
\item $\frac{\partial^2\Xi_i(P_I)}{\partial P_{I,j} \partial P_{I,k}}\leq 0 \quad \forall i,j,k \in \{1,\dots,n\}$
\item $\omega(0,0)=0$\\ $\frac{\partial \omega(P_{R,i},P_{I,i})}{\partial P_{I,i}}\bigg|_{(0,0)}=\delta$
\item $\frac{\partial \omega(P_{R,i},P_{I,i})}{\partial P_{I,i}}>0 \quad \forall P_{I,i}\in(0,1)$
\item $\frac{\omega(P_{R,i},P_{I,i})}{P_{I,i}}$ is an increasing function of both $P_{R,i}$ and $P_{I,i}$. More specifically: $\frac{\omega(s_1,t_1)}{s_1}<\frac{\omega(s_2,t_2)}{s_2}$ if $s_1<s_2$ and $t_1<t_2$.
\end{enumerate}

The main result of this section is as follows.
\begin{theorem}
If $\frac{\beta\lambda_{\max}(A)}{\delta}>1$, the nonlinear map (\ref{nonlinear_R}, \ref{nonlinear_I}), or equivalently (\ref{nonlinear_R_new}, \ref{nonlinear_I_new}), has a unique nontrivial fixed point.
\end{theorem}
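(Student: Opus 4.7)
The plan is to first reduce the fixed-point equations to a single system in $P_I^*$, then establish existence using the instability of the origin implied by the threshold condition, and finally establish uniqueness via a concave-operator comparison using properties (a)--(f). At any fixed point, (\ref{nonlinear_R_new}) forces $P_{R,i}^*=(\delta/\gamma)P_{I,i}^*$, so $P_R^*$ is pinned by $P_I^*$; substituting into (\ref{nonlinear_I_new}), a nontrivial fixed point satisfies
\[
\Xi_i(P_I^*)=\omega\!\left(\tfrac{\delta}{\gamma}P_{I,i}^*,\,P_{I,i}^*\right),\qquad i=1,\dots,n.
\]
Connectedness of $G$ together with property (b) forces every coordinate $P_{I,i}^*$ to be strictly positive: any zero coordinate $i$ with an infected neighbor $j$ would make the left-hand side positive and the right-hand side zero.

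For \emph{existence}, I would work on the compact, convex, forward-invariant domain $\Delta=\{(P_R,P_I)\in[0,1]^{2n}: P_{R,i}+P_{I,i}\leq 1\text{ for all }i\}$. The threshold condition $\beta\lambda_{\max}(A)>\delta$ makes the linearization at the origin unstable along the Perron direction $v>0$ of $A$, so for small $\epsilon>0$ the point $(0,\epsilon v)$ is a strict subsolution. From here one can either iterate the map starting from this subsolution to obtain a coordinate-monotone increasing sequence bounded above in $\Delta$ whose limit must be a nontrivial fixed point, or invoke a Brouwer/degree argument on $\Delta$ that uses the unstable index of the origin to force a second fixed point. The key technical step is verifying that the nonlinear map sends subsolutions to subsolutions despite the coupling introduced by the $(1-P_R-P_I)$ factor.

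For \emph{uniqueness}, suppose $P^{(1)},P^{(2)}$ are two nontrivial fixed points; both have strictly positive $P_I$ coordinates by the earlier remark. Set $\alpha^*=\min_i P^{(2)}_{I,i}/P^{(1)}_{I,i}$ and, by symmetry, assume $\alpha^*\leq 1$, with equality attained at some index $i_0$. Property (b) (monotonicity of $\Xi_{i_0}$) combined with property (c) (concavity of $\Xi_{i_0}$, which with $\Xi_{i_0}(0)=0$ yields the starshapedness $\Xi_{i_0}(\alpha^* P^{(1)}_I)\geq \alpha^*\Xi_{i_0}(P^{(1)}_I)$) gives
\[
\Xi_{i_0}(P^{(2)}_I)\;\geq\; \alpha^*\,\omega\!\left(\tfrac{\delta}{\gamma}P^{(1)}_{I,i_0},\,P^{(1)}_{I,i_0}\right).
\]
At $i_0$ one has $(P^{(2)}_{R,i_0},P^{(2)}_{I,i_0})=(\alpha^* P^{(1)}_{R,i_0},\alpha^* P^{(1)}_{I,i_0})$, and property (f) yields the strict inequality $\omega(\alpha^* s,\alpha^* t)<\alpha^*\omega(s,t)$ whenever $\alpha^*<1$ and $(s,t)\neq 0$. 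Combining with the fixed-point identity $\Xi_{i_0}(P^{(2)}_I)=\omega(P^{(2)}_{R,i_0},P^{(2)}_{I,i_0})$ produces a contradiction unless $\alpha^*=1$; swapping the roles of $P^{(1)}$ and $P^{(2)}$ then gives $P^{(1)}_I=P^{(2)}_I$ and hence $P^{(1)}_R=P^{(2)}_R$.

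The main obstacle will be the existence step: because $P_R$ enters multiplicatively through the $(1-P_R-P_I)$ factor, the $2n$-dimensional dynamics are not coordinate-monotone in the clean way the SIS analysis of \cite{ahn2013global} exploits, so one must either carefully construct a sub-/super-solution pair that respects this coupling or replace the monotone-iteration proof with a topological index computation. The uniqueness argument, by contrast, is a routine Krasnosel'skii-type adaptation once properties (a)--(f) are in hand.
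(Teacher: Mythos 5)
Your uniqueness half is correct and is essentially the paper's own argument: you pin $P_R^*=(\delta/\gamma)P_I^*$ via (\ref{nonlinear_R_new}), get strict positivity of $P_I^*$ from connectedness and properties (b), (d), and then run the scaling comparison at the coordinate where the ratio $\alpha^*=\min_i P^{(2)}_{I,i}/P^{(1)}_{I,i}$ is attained, combining monotonicity (b), the star-shapedness $\Xi_i(\alpha x)\geq\alpha\,\Xi_i(x)$ coming from concavity (c), and the strict inequality $\omega(\alpha s,\alpha t)<\alpha\,\omega(s,t)$ from (f) --- this is precisely the chain (\ref{rnd_eq_1})--(\ref{rnd_eq_4}) in the paper, with the minor (harmless) difference that the paper compares a putative second zero against the maximal element of its set $U$ rather than two arbitrary fixed points.

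The genuine gap is the existence step, which you leave as a plan with two unexecuted routes and which you yourself flag as the main obstacle. The monotone-iteration route does fail for the reason you give: the update for $P_I$ is decreasing in $P_R$ through the factor $(1-P_R-P_I)$, so the $2n$-dimensional map is not order-preserving and ``subsolutions map to subsolutions'' is exactly the unproven claim. The degree/index alternative is also not routine as stated: the trivial fixed point sits on the boundary (indeed at a corner) of your domain $\Delta$, so one cannot directly read off a second fixed point from ``the unstable index of the origin'' without extending or perturbing the map, ruling out an eigenvalue exactly equal to $1$, and controlling possible boundary fixed points. The paper avoids the dynamics entirely: it defines $U_i=\{x_I:\Psi_i([x_R;x_I])\geq 0\ \text{for some } 0_n\preceq x_R\preceq 1_n-x_I\}$ and $U=\bigcap_i U_i$, shows $\epsilon v\in U$ using the vector $v\succ 0_n$ with $(\beta A-\delta I_n)v\succ 0_n$ from Lemma 3.1 of \cite{ahn2013global}, proves $U$ is closed under coordinate-wise maximum (the existential quantifier over $x_R$ is what neutralizes the $(1-P_R-P_I)$ coupling you were worried about, since one may borrow the witness $x_R$ coordinate-wise from whichever of the two points attains the max), takes the maximal element $x^*\succ 0_n$, and shows $\Psi(x^*)$ must vanish by a push-up argument using property (e): if some $\Psi_i>0$, coordinate $i$ of $x^*$ could be increased while staying in $U$, contradicting maximality. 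An equally viable fix, closer to your own reduction, is to substitute $P_R=(\delta/\gamma)P_I$ first and apply this same static maximal-element argument to the $n$-variable system $\Xi_i(x)=\delta x_i/\bigl(1-(1+\delta/\gamma)x_i\bigr)$, which is of the SIS form treated in \cite{ahn2013global}; but some such argument has to be carried out --- as written, your existence step is a proposal, not a proof.
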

\begin{proof}
Let's define the map $\Psi \colon [0,1]^{2n} \to R^n$ as $\Psi=[\Psi_1,\dots,\Psi_n]^\top$ with
\begin{equation}
\Psi_i(P(t))=\Xi_i(P_I(t))-\omega(P_{R,i}(t),P_{I,i}(t)) .
\end{equation}
Note that zeros of $\Psi$ correspond to fixed points of the nonlinear map (Eq. \ref{nonlinear_I_new}).

Now we define sets $U_i$ and $U$ as follows:
\begin{equation}
U_i=\{ x_I\in[0,1]^n: \Psi_i(\begin{bmatrix}x_R\\x_I\end{bmatrix})\geq 0, 0_n\preceq x_R\preceq 1_n-x_I\} ,
\end{equation}
\begin{equation}
U=\bigcap\limits_{i=1}^n U_i .
\end{equation}
In plain words, $U$ is the set of ``infection situations'' from which the system becomes ``more infected'' or remains there.

From Lemma 3.1 in \cite{ahn2013global}, $\lambda_{\max}((1-\delta)I_n+\beta A)>1$ implies that there exists $v\succ0_n$ such that $(\beta A-\delta I_n)v\succ0_n$. On the other hand $\Psi(0_{2n})=0_n$ and the Jacobian of $\Psi$ at the origin is equal to $\begin{bmatrix}0_{n\times n} & \beta A-\delta I_n\end{bmatrix}_{n\times 2n}$. As a result, there exists a small $\epsilon>0$ such that $\Psi(\begin{bmatrix}\epsilon u\\ \epsilon v\end{bmatrix})=(\beta A-\delta I_n)v\epsilon$, which is $\succ 0_n$, and indicates that $\epsilon v \in U$.

We claim that if $x,y\in U$, then $\max(x,y)\triangleq(\max(x_1,y_1),\dots,\max(x_n,y_n))\in U$.
For all $i\in\{1,\dots,n\}$, $\exists\, a_i\in[0,1-x_i]\mbox{ s.t. } \Xi_i(x)-\omega(a_i,x_i)\geq0$, and $\exists\, b_i\in[0,1-y_i]\mbox{ s.t. } \Xi_i(y)-\omega(b_i,y_i)\geq0$.
\begin{align}
\Psi_i(\begin{bmatrix}c\\ \max(x,y)\end{bmatrix}) &=\Xi_i(\max(x,y))-\omega(c_i,\max(x_i,y_i)).
\end{align}
Without loss of generality assume $\max(x_i,y_i)=x_i$, then if we pick $c_i=a_i$, it follows that:
\begin{align}
\Psi_i(\begin{bmatrix}c\\ \max(x,y)\end{bmatrix}) &=\Xi_i(\max(x,y))-\omega(a_i,x_i)\\
&\geq \Xi_i(x)-\omega(a_i,x_i)\geq 0\label{random_ineq_1} .
\end{align}
Inequality \eqref{random_ineq_1} comes from Property (b). Now $\max(x,y)\in U_i$, and we can use the same argument for all $i$. Hence $\max(x,y)\in U$, and the claim is true.

It follows that there exists a unique maximal point $x^*\in U$ such that $x^*\succeq x$ for all $x\in U$. Moreover, since $\epsilon v\in U$, we can conclude that $x^*\succ 0_n$ (all elements of $x^*$ are positive).

Now we further claim that $\Psi_i(\begin{bmatrix}a\\x^*\end{bmatrix})=0$ for some $0_n\preceq a\preceq 1_n-x^*$ and $\forall i\in \{1,\dots,n\}$. Assume, by the way of contradiction, that $\Psi_i(\begin{bmatrix}a\\x^*\end{bmatrix})\neq0$ for all $0_n\preceq a\preceq 1_n-x^*$, which means $\Psi_i(\begin{bmatrix}a\\x^*\end{bmatrix})>0$. Since $\Psi_i(\begin{bmatrix}a\\x^*\end{bmatrix})=\Xi_i(x^*)-\omega(a_i,x_i^*)>0$ and $\omega(a_i,x_i^*)<\omega(a_i,z_i)$ for any $z_i>x_i^*$ (Property (e)), there exists $z_i>x_i^*$ such that
\begin{equation}\label{eq13}
\Xi_i(x^*)-\omega(a_i,z_i)\geq 0 .
\end{equation}
Now define $z=[z_1,\dots,z_n]^\top$ with $z_j=x_j \, \forall j\neq i$. For every $k\in\{1,\dots,n\}$ we have
$$\Psi_k(\begin{bmatrix}a\\z\end{bmatrix})=\Xi_k(z)-\omega(a_k,z_k)\geq \Xi_k(x^*)-\omega(a_k,z_k)\geq 0 ,$$
for some $0_n\preceq a\preceq 1_n-z$. The first inequality holds by Property (b). The second inequality holds by \eqref{eq13} for $k=i$, and by definition for $k\neq i$. It implies that $z\in U$. Since $z_i>x_i^*$, this contradicts the fact that $x^*$ is the maximal point is $U$. Hence $\Psi_i(\begin{bmatrix}a\\x^*\end{bmatrix})=0$ for some $0_n\preceq a\preceq 1_n-x^*$, and this is true for all $i\in\{1,\dots,n\}$. Thus far we have proved that there exists a nontrivial zero for $\Psi$.

We note that in order for a point $\begin{bmatrix}p_R^*\\p_I^*\end{bmatrix}$ to be a fixed point of the nonlinear map, it should satisfy Eq. \eqref{nonlinear_R_new}, i.e.
\begin{equation}\label{relation}
p_{R,i}^*=(1-\gamma)p_{R,i}^*+\delta p_{I,i}^* \implies p_{R,i}^*=\frac{\delta}{\gamma}p_{I,i}^* .
\end{equation}

For proving the uniqueness of nontrivial zero of $\Psi$, assume by contradiction that in addition to $x^*$, $y^*$ is another nontrivial zero. Therefore $y^*\in U$, and $\Psi(\begin{bmatrix}b\\y^*\end{bmatrix})=0_n$ for some $0_n\preceq b\preceq 1_n-y^*$.

We claim that $y^*$ is all-positive. Let us define $K_0=\{1\leq i\leq n : y_i^*=0\}$ and $K_+=\{1\leq i\leq n : y_i^*>0\}$. $K_0\cup K_+=\{1,\dots,n\}$. Assume that $K_0$ is not empty and $k\in K_0$. Since $G$ is connected, there exists $j\in K_+$ such that $j$ is a neighbor of a $k$.
\begin{equation}
\Psi_k(\begin{bmatrix}b\\y^*\end{bmatrix})=\Xi_k(y^*)-\omega(b_k,y_k^*)=\Xi_k(y^*)>0 .
\end{equation}
The second equality holds by Property (d) and due to $b_k=y_k^*=0$ (from Eq. \ref{relation}). The inequality comes from Property (b) ($k\in N_j$) and $y_j^*>0$. This contradicts $\Psi(\begin{bmatrix}b\\y^*\end{bmatrix})=0_n$, and implies that $K_0=\emptyset$, and therefore every element of $y^*$ is positive.

By Property (c), and from Lemma 2.1 in \cite{ahn2013global}, we know for $s\leq 1$
$$\frac{\Xi_i(u+sv)-\Xi_i(u)}{s}\geq \frac{\Xi_i(u+v)-\Xi_i(u)}{1} .$$
By setting $u=0_n$ and $v=x^*$, and using Property (a), it follows that
\begin{equation}\label{nice}
\frac{\Xi_i(sx^*)}{s}\geq \Xi_i(x^*) .
\end{equation}

For $x^*$ and $y^*$ there exists $\alpha\in(0,1)$ such that $y^*\succeq\alpha x^*$ and $y_j^*=\alpha x_j^*$ for some $j\in\{1,\dots,n\}$.
\begin{align}
\Psi_j(\begin{bmatrix}b\\y^*\end{bmatrix}) &=\Xi_j(y^*)-\omega(b_j,\alpha x_j^*)\\
&\geq \Xi_j(\alpha x^*)-\omega(b_j,\alpha x_j^*)\label{rnd_eq_1}\\
&\geq \alpha\Xi_j(x^*)-\omega(b_j,\alpha x_j^*)\label{rnd_eq_2}\\
&> \alpha\Xi_j(x^*)-\alpha \omega(\frac{b_j}{\alpha},x_j^*)\label{rnd_eq_3}\\
&= \alpha\big(\Xi_j(x^*)- \omega(a_j,x_j^*)\big)=0\label{rnd_eq_4} .
\end{align}
Inequality \eqref{rnd_eq_1} holds by Property (b), \eqref{rnd_eq_2} follows from \eqref{nice}, \eqref{rnd_eq_3} holds by Property (f), and finally \eqref{rnd_eq_4} comes from \eqref{relation}. This contradicts that $\Psi_i(\begin{bmatrix}b\\y^*\end{bmatrix})=0$ for all $i$.

It concludes that $\begin{bmatrix}a\\x^*\end{bmatrix}$ is the unique nontrivial zero of $\Psi$, and hence the unique nontrivial fixed point of the system.
\end{proof}

\subsection{Stability of the Nontrivial Fixed Point}
Since the trivial fixed point was globally stable when $\frac{\beta\lambda_{\max}(A)}{\delta}<1$, the existence of a second unique fixed point at $\frac{\beta\lambda_{\max}(A)}{\delta}>1$ raises the question of whether it is also stable. However, it turns out that this is not true in general. In fact, same as immune-admitting SIS model in \cite{ahn2013global}, we can find simple examples in which the system converges to a cycle rather than the unique second fixed point.

Nevertheless, in the immune-admitting SIS, this fixed point has shown to be stable with high probability for Erd\H{o}s-R\'enyi graphs \cite{ahn2013global}. Furthermore, in a variation of SIS model \cite{chakrabarti2008epidemic}, the second fixed point is indeed globally stable.

\section{Vaccination}

In this section we consider the effect of vaccination by incorporating direct immunization into the model studied in the previous sections. In other words, the transition from $S$ to $R$ is also permitted now (See Fig. \ref{fig2}). This class of processes are sometimes referred to as SIV (Susceptible-Infected-Vaccinated) epidemics, although the term is often used for population-based estimated models. Depending on the value of $\gamma$, this model can represent temporary ($\gamma\neq 0$) or permanent ($\gamma=0$) immunization.
\begin{figure}[thpb]
  \centering
    \includegraphics[width=0.7\columnwidth]{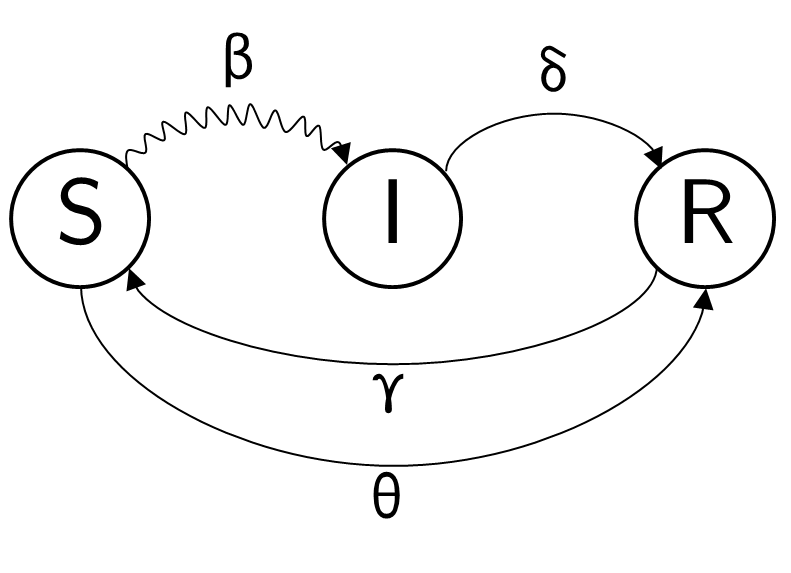}
      \caption{State diagram of a single node in the SIRS-with-Vaccination model, and the transition rates. Wavy arrow represents exogenous (network-based) transition. $\theta$ represents the probability of direct immunization.}
      \label{fig2}
\end{figure}

\subsection{Infection-Dominant Model}

In this case, assuming that the infection is dominant (meaning that if a susceptible node receives both infection and vaccine at the same time, it gets infected), the elements of state transition matrix are
\begin{align}\label{MC1_id}
S_{X,Y}&= \mathbb{P}\left\{\xi(t+1)=Y \mid \xi(t)= X\right\}\notag\\
&= \prod_{i=1}^n \mathbb{P}\left\{\xi_i(t+1)=Y_i \mid \xi(t) = X\right\} ,
\end{align}
where
\begin{multline}\label{MC2_id}
\mathbb{P}\left\{\xi_i(t+1)=Y_i \mid \xi(t) = X\right\}=\\
\begin{cases}
(1-\beta)^{m_i}(1-\theta),& \text{if } (X_i,Y_i)=(0,0)\\
1-(1-\beta)^{m_i},& \text{if } (X_i,Y_i)=(0,1)\\
(1-\beta)^{m_i}\theta,& \text{if } (X_i,Y_i)=(0,2)\\
0,& \text{if } (X_i,Y_i)=(1,0)\\
1-\delta,& \text{if } (X_i,Y_i)=(1,1)\\
\delta,& \text{if } (X_i,Y_i)=(1,2)\\
\gamma,& \text{if } (X_i,Y_i)=(2,0)\\
0,& \text{if } (X_i,Y_i)=(2,1)\\
1-\gamma,& \text{if } (X_i,Y_i)=(2,2)\\
\end{cases} ,
\end{multline}
and as before $m_i=\left\vert{\left\{ {j\in N_i} \mid X_j=1\right\}}\right\vert=\left\vert{N_i\cap I(t)}\right\vert$. As can be noticed, the first and the third element in Eq. \eqref{MC2_id} have changed, and for $\theta=0$ the model reduces to the non-vaccinating one.

In this infection-dominant model the marginal probabilities are:
\begin{empheq}[box=\fbox]{align}
p_{R,i}&(t+1) =(1-\gamma)p_{R,i}(t)+\delta p_{I,i}(t)\notag\\
&+ (1-\beta)^{m_i}\theta(1-p_{R,i}(t)-p_{I,i}(t)) ,\label{exact_R_id}\\
p_{I,i}&(t+1) =(1-\delta)p_{I,i}(t)\notag\\
&+(1-(1-\beta)^{m_i})(1-p_{R,i}(t)-p_{I,i}(t)) ,\label{exact_I_id}
\end{empheq}
and
\begin{equation}\label{exact_S_id}
p_{S,i}(t+1) =(1-\beta)^{m_i} (1-\theta)(1-p_{R,i}(t)-p_{I,i}(t))+\gamma p_{R,i}(t) ,
\end{equation}
which is again consistent with the fact that $p_{S,i}(t)+p_{I,i}(t)+p_{R,i}(t)=1$ for all $t$.

The steady state behavior in the presence of immunization is rather different from the non-vaccinating case, in which all the node became susceptible. In this model, once there is no node in the infected state, the Markov chain reduces to a simpler Markov chain, where the nodes are all decoupled. In fact from that time on, each node has an independent transition probability between $S$ and $R$. The stationary distribution of each single node is then $P_S^* = \frac{\gamma}{\gamma+\theta}$ and $P_R^* = \frac{\theta}{\gamma+\theta}$ (Fig. \ref{steady}). In order for this MC to converge, we should have $\gamma\theta \neq 1$. The stationary distribution of each state $X$ is then:
$$\pi_X = \prod_{i=1}^n (\frac{\gamma}{\gamma+\theta})^{\mathbb{I}(X_i=0)} \cdot 0^{\mathbb{I}(X_i=1)} \cdot (\frac{\theta}{\gamma+\theta})^{\mathbb{I}(X_i=2)}$$
\begin{figure}[thpb]
  \centering
    \includegraphics[width=0.4\columnwidth]{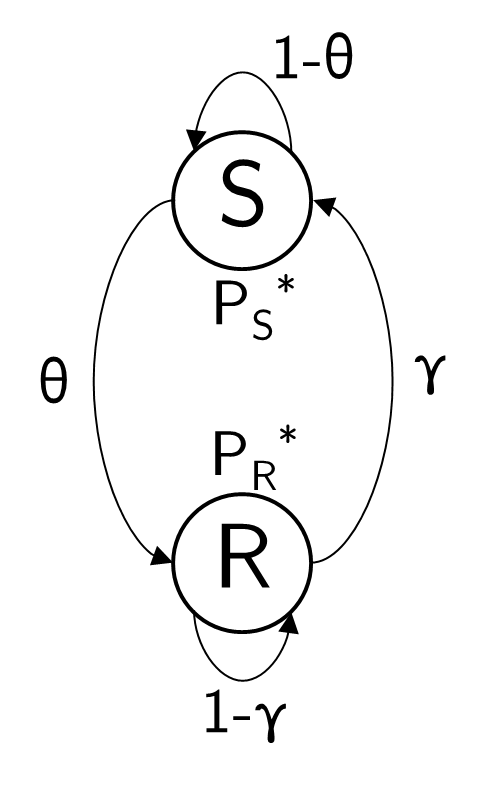}
      \caption{Reduced Markov chain of a single node in the steady state.}
      \label{steady}
\end{figure}

Now the nonlinear map (mean-field approximation of the Markov chain model) can is obtained as:
\begin{empheq}[box=\fbox]{align}
P&_{R,i}(t+1) =(1-\gamma)P_{R,i}(t)+\delta P_{I,i}(t)\notag\\
&+ \prod_{j\in N_i} (1-\beta P_{I,j}(t))\theta(1-P_{R,i}(t)-P_{I,i}(t)), \label{nonlinear_R_id}\\
P&_{I,i}(t+1) =(1-\delta)P_{I,i}(t)+\notag\\
&(1-\prod_{j\in N_i} (1-\beta P_{I,j}(t)))(1-P_{R,i}(t)-P_{I,i}(t)), \label{nonlinear_I_id}
\end{empheq}
and
\begin{multline}\label{nonlinear_S_id}
P_{S,i}(t+1)=\gamma P_{R,i}(t) +\\
\prod_{j\in N_i} (1-\beta P_{I,j}(t)) (1-\theta) (1-P_{R,i}(t)-P_{I,i}(t)) .
\end{multline}

It can be easily verified that one fixed point of this nonlinear map occurs at $P_{R,i}(t)=P_R^*$ and $P_{I,i}(t)=0$, i.e.
$$\begin{bmatrix}P_{R}(t)\\ P_{I}(t)\end{bmatrix}= \begin{bmatrix} \frac{\theta}{\gamma+\theta}1_n\\ 0_n\end{bmatrix}, $$
which is nicely consistent with the steady state of the Markov chain.

By expanding the above model around its fixed point, up to the first order, and doing some algebra, the linear model can be obtained as:
\begin{align}
\tilde{P}_{R,i}(t+1) = & P_R^* + (1-\gamma-\theta)(\tilde{P}_{R,i}(t)-P_R^*)\notag\\
&+(\delta-\theta)\tilde{P}_{I,i}(t) -\theta P_S^*\beta\sum\limits_{j\in N_i} \tilde{P}_{I,j} ,\\
\tilde{P}_{I,i}(t+1) = & (1-\delta)\tilde{P}_{I,i}(t)+ \beta\sum\limits_{j\in N_i} \tilde{P}_{I,j} (1-P_R^*) ,
\end{align}
and the matrix form of:
\begin{empheq}[box=\fbox]{align}
&\begin{bmatrix}\tilde{P}_R(t+1)\\\tilde{P}_I(t+1)\end{bmatrix}=\begin{bmatrix}  P_R^* 1_n\\ 0_n\end{bmatrix} + M' \begin{bmatrix}\tilde{P}_R(t)-P_R^* 1_n\\\tilde{P}_I(t) - 0_n\end{bmatrix} ,\\
&\text{where}\notag\\
&M'=\begin{bmatrix}
(1-\gamma-\theta)I_n & (\delta-\theta) I_n - \theta  P_S^* \beta A\\
0_{n\times n} & (1-\delta)I_n+ P_S^*\beta A
\end{bmatrix} .
\end{empheq}

\subsubsection{Stability of the Fixed Point}
The following result summarizes the stability of the (disease-free) fixed point.
\begin{proposition}
The main fixed point of the nonlinear map (\ref{nonlinear_R_id}, \ref{nonlinear_I_id}) is
\begin{enumerate}[label=\alph*)]
\item locally stable, if $\frac{\gamma}{\gamma+\theta}\frac{\beta}{\delta}\lambda_{max}(A)<1$, and
\item globally stable, if $\frac{\beta}{\delta}\lambda_{max}(A)<1$ .
\end{enumerate}
\end{proposition}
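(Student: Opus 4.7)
The plan is to treat parts (a) and (b) separately, with (a) being a routine linearization argument and (b) requiring a two-stage decoupling of the $P_I$ and $P_R$ dynamics.

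For (a), I would exploit the block upper-triangular structure of $M'$: its spectrum is the union of the spectra of the two diagonal blocks. The first block $(1-\gamma-\theta)I_n$ contributes the eigenvalue $1-\gamma-\theta$, whose modulus is less than one whenever $0 < \gamma+\theta < 2$ (automatic for probabilities). The second block $(1-\delta)I_n + P_S^*\beta A$ is symmetric with largest eigenvalue $1 - \delta + \frac{\gamma}{\gamma+\theta}\beta\lambda_{\max}(A)$, and this is strictly less than one exactly when $\frac{\gamma}{\gamma+\theta}\frac{\beta}{\delta}\lambda_{\max}(A) < 1$. Standard Lyapunov linearization then yields local asymptotic stability of the fixed point.

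For (b), I would argue in two stages. In the first stage, use the elementary bound $1 - \prod_{j\in N_i}(1-\beta P_{I,j}) \leq \beta \sum_{j\in N_i} P_{I,j}$ together with $(1-P_{R,i}-P_{I,i}) \leq 1$ inside \eqref{nonlinear_I_id} to obtain the entrywise domination
$$P_I(t+1) \preceq \bigl((1-\delta)I_n + \beta A\bigr) P_I(t).$$
Under $\frac{\beta}{\delta}\lambda_{\max}(A)<1$, the matrix $(1-\delta)I_n+\beta A$ has spectral radius strictly below one, forcing $P_I(t)\to 0_n$ geometrically; this is the same technique as in Proposition 1 and Theorem \ref{thm_mixing}. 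In the second stage, let $\epsilon_i(t) := P_{R,i}(t) - P_R^*$. Substituting into \eqref{nonlinear_R_id} and using the fixed-point identity $-\gamma P_R^* + \theta P_S^* = 0$, the $P_R$-recursion collapses to the scalar form
$$\epsilon_i(t+1) = (1-\gamma-\theta)\,\epsilon_i(t) + h_i(t),$$
where the residual $h_i(t)$ collects every term containing a factor of some $P_{I,j}(t)$ and is bounded by a constant multiple of $\|P_I(t)\|$. Since $|1-\gamma-\theta|<1$ and $h_i(t)$ decays geometrically by stage one, a straightforward discrete-time input-to-state stability estimate (iterate the recursion and sum a convolution of two geometric sequences) yields $\epsilon(t)\to 0$, completing the global convergence argument.

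The principal obstacle is the second stage: the $P_R$ dynamics are genuinely coupled to $P_I$ through the vaccination term $\prod_{j\in N_i}(1-\beta P_{I,j})\,\theta\,(1-P_{R,i}-P_{I,i})$, and one must rule out that this vanishing perturbation could nonetheless keep $P_R$ away from $P_R^*$. The quantitative ingredient that makes everything work is that both the unforced contraction rate $|1-\gamma-\theta|$ and the forcing from stage one are strictly geometric, so their convolution still decays to zero. Once that bookkeeping is in place, parts (a) and (b) of the proposition follow.
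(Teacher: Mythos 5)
Your proposal is correct and follows essentially the same route as the paper: part (a) by reading off the spectrum of the block upper-triangular Jacobian $M'$, whose critical block is $(1-\delta)I_n+P_S^*\beta A$ with $P_S^*=\frac{\gamma}{\gamma+\theta}$, and part (b) by the entrywise domination $P_I(t+1)\preceq\big((1-\delta)I_n+\beta A\big)P_I(t)$, exactly as in \eqref{upperbound_1_id}--\eqref{upperbound_2_id}. Your second stage---the input-to-state-style argument showing $P_R(t)\to P_R^*1_n$ once $P_I(t)$ decays geometrically, using the identity $-\gamma P_R^*+\theta P_S^*=0$ and $|1-\gamma-\theta|<1$---is a detail the paper's proof leaves implicit (it stops after bounding $P_I$), so your write-up is, if anything, more complete than the original.
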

\begin{proof}
$M'$ is in fact the Jacobian matrix of the nonlinear map, and its largest eigenvalue is less than $1$ if the largest eigenvalue of $(1-\delta)I_n+P_S^*\beta A)$ is less than $1$. It follows that the fixed point is locally stable under this condition, and the statement a is true.

Eq. \eqref{nonlinear_I_id} can be upperbounded as:
\begin{align}
P&_{I,i}(t+1) =(1-\delta)P_{I,i}(t)+\notag\\
&(1-\prod_{j\in N_i} (1-\beta P_{I,j}(t)))(1-P_{R,i}(t)-P_{I,i}(t))\notag\\
&\leq (1-\delta)P_{I,i}(t)+ (\beta\sum\limits_{j\in N_i}P_{I,j})(1-P_{R,i}(t)-P_{I,i}(t))\label{upperbound_1_id}\\
&\leq (1-\delta)P_{I,i}(t)+ \beta\sum\limits_{j\in N_i}P_{I,j}\label{upperbound_2_id} ,
\end{align}
which implies the statement b.
\end{proof}
Note that from \eqref{upperbound_1_id} to \eqref{upperbound_2_id} it is not possible to show an upperbound of $(1-\delta)P_{I,i}(t)+ (\beta\sum\limits_{j\in N_i}P_{I,j})(1-P_R^*)$ instead; as it requires $P_{R,i}(t)+P_{I,i}(t)\geq P_R^*$, which is equivalent to $P_{S,i}(t)\leq P_S^*$, that is not true in general. The authors of \cite{prakash2012threshold} have shown the same condition for the local stability, but they do not provide any result on the global stability.
 
\subsubsection{Mixing Time of MC}
We show that the mixing time of the infection-dominant Markov chain is also $O(\log n )$, when $\frac{\beta}{\delta}\lambda_{max}(A)<1$. Vectors $\mu(t)$, $p(t)$ and the matrix $B$ are defined as before.

\begin{theorem}\label{thm_mixing_id}
If $\frac{\beta\lambda_{\max}(A)}{\delta}<1$, the mixing time of the Markov chain whose transition matrix $S$ is described by Eqs. \eqref{MC1_id} and \eqref{MC2_id} is $O(\log n)$.
\end{theorem}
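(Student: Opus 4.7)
The plan is to parallel the proof of Theorem~\ref{thm_mixing}, with two modifications: (i)~re-verify the linear-programming step under the new transition probabilities, and (ii)~replace the argument that $\pi=e_{\bar{0}}$ with one that accounts for the non-trivial stationary distribution.

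First I would observe that, relative to Eqs.~\eqref{MC1}--\eqref{MC2}, the only entries that change in Eqs.~\eqref{MC1_id}--\eqref{MC2_id} are the transitions $(X_i,Y_i)\in\{(0,0),(0,2)\}$; the probabilities $\mathbb{P}\{Y_i=1\mid X\}$ are identical. Consequently the column $SBf_i$ computed in Eq.~\eqref{SBf_i} is unchanged for $i\in\{n+1,\dots,2n\}$, the same dual certificate $\lambda^*$ remains feasible, and the LP-duality argument delivers
$$p_{I,i}(t+1) \leq (1-\delta)p_{I,i}(t) + \beta\sum_{j\in N_i} p_{I,j}(t).$$
Writing $M_{II} = (1-\delta)I_n+\beta A$, this gives $p_I(t) \preceq M_{II}^t\, p_I(0)$, and in particular $\mathbb{P}_{\mu}(\text{some node infected at time } t)\leq n\|M_{II}\|^t$, with $\|M_{II}\|<1$ by hypothesis.

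The new ingredient is the treatment of $\pi$, which is now supported on $\{0,2\}^n$ with independent coordinates distributed as $(P_S^*,P_R^*)$. The key structural fact is that once the chain enters $\{0,2\}^n$ it remains there forever and evolves as a product of $n$ independent two-state $S$-$R$ chains, each contracting at rate $|1-\gamma-\theta|$. I would therefore construct a Markovian coupling $(\xi_t,\eta_t)$ with $\xi_0\sim\mu$ arbitrary and $\eta_0\sim\pi$: let the two chains evolve independently until the hitting time $T = \min\{s : I(\xi_s)=\emptyset\}$, and from time $T$ onward apply the optimal coordinate-wise coupling of the two-state $S$-$R$ chain on each node (note that $\eta$ already lives in $\{0,2\}^n$). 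Splitting the time horizon at $t/2$,
$$\|\mu S^t-\pi\|_{TV} \leq \mathbb{P}(\xi_t\neq\eta_t) \leq \mathbb{P}(T>t/2) + 2n|1-\gamma-\theta|^{t/2} \leq n\|M_{II}\|^{t/2} + 2n|1-\gamma-\theta|^{t/2},$$
where the bound on $\mathbb{P}(T>t/2)$ follows from the previous paragraph (since $\{T>t/2\}\subseteq\{I(\xi_{t/2})\neq\emptyset\}$) and the second term uses a union bound over the $n$ coordinates in phase~2. Both terms fall below $\epsilon$ once $t = O(\log(n/\epsilon))$, yielding $t_{mix}(\epsilon)=O(\log n)$.

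The main obstacle is the bookkeeping around the hitting time $T$: one must verify that the coordinate-wise coupling is well-defined for every possible value of $(\xi_T,\eta_T)\in\{0,2\}^n\times\{0,2\}^n$, and that the optimal coupling of the two-state $S$-$R$ chain indeed has failure probability bounded by a multiple of $|1-\gamma-\theta|^r$ after $r$ steps. Apart from these essentially standard manoeuvres, every remaining step either reuses the proof of Theorem~\ref{thm_mixing} verbatim or invokes a textbook fact about product chains of two-state Markov chains.
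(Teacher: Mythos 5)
Your proposal is correct, and its first half is exactly the paper's: the paper likewise observes that the entries of $SBf_i$ relevant to $i\in\{n+1,\dots,2n\}$ do not depend on $\theta$, reuses the dual certificate from Theorem~\ref{thm_mixing}, and obtains $p_I(t+1)\preceq\big((1-\delta)I_n+\beta A\big)p_I(t)$. Where you genuinely diverge is the second half. The paper simply asserts that for every $t<t_{mix}(\epsilon)$ one has $\epsilon<\mathbb{P}\{\text{some node infected at }t\mid\text{all infected at }0\}$ and concludes $t_{mix}(\epsilon)\leq \log(n/\epsilon)/\big(-\log\|(1-\delta)I_n+\beta A\|\big)$; this treats ``no infected nodes'' as if it were equivalent to being mixed, which is only valid when $\pi$ is the point mass $e_{\bar{0}}$ (the $\theta=0$ case of Theorem~\ref{thm_mixing}). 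Here $\pi$ is the nontrivial product distribution on $\{0,2\}^n$, and an initial state with no infected nodes but all nodes susceptible can remain far from $\pi$ for a time of order $1/(\gamma+\theta)$ even after the infection probability has dropped below $\epsilon$, so the paper's displayed inequality does not hold verbatim. Your two-phase coupling --- evolve independently until the absorbing infection-free set is hit, then couple the $n$ decoupled two-state $S$--$R$ chains coordinate-wise --- is precisely what is needed to fill this in: it gives $\|\mu S^t-\pi\|_{TV}\leq n\|(1-\delta)I_n+\beta A\|^{t/2}+2n|1-\gamma-\theta|^{t/2}$, which is still $O(\log n)$ mixing, with the constant now (correctly) depending on $\gamma+\theta$ as well as on $\|(1-\delta)I_n+\beta A\|$. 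The details you flag are indeed routine: the infection-free set is absorbing, $\eta_t\sim\pi$ for all $t$ by stationarity, and the optimal one-step coupling of the two-state chain coalesces with probability $1-|1-\gamma-\theta|$, so each coordinate fails to couple after $r$ steps with probability $|1-\gamma-\theta|^{r}$ (assuming $\gamma+\theta>0$, which is needed anyway for the stationary distribution the paper writes down). In short, your route is slightly longer but more complete; the paper's is shorter but imports the point-mass argument of Theorem~\ref{thm_mixing} into a setting where it no longer literally applies.
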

\begin{proof}
First we show that for each $i\in\left\{n+1,n+2,\dots,2n\right\}$, we have $p_i(t+1) \leq (1-\delta)p_i(t) + \beta\sum\limits_{j\in N_i}p_j(t)$.
$f_i\in \mathbb{R}^{2n+1}$ represents the $i^{th}$ unit column vector, as before.
For the sake of convenience, let us drop the time index $(t)$ again.
\begin{align}
\max_{\mu B=p^T, \mu\succeq 0} p_i(t+1) &= \max_{\mu B=p^T, \mu\succeq 0} \mu SBf_i\\
 &= \max_{\mu\succeq 0} \min_\lambda \mu SBf_i-(\mu B-p^T)\lambda\\ \label{minmax_id}
 &= \min_\lambda \max_{\mu\succeq 0} \mu(SBf_i-B\lambda) +p^T\lambda ,
\end{align}
where $\lambda\in\mathbb{R}^{2n+1}$ is the column vector of Lagrange multipliers. By the same argument as in the proof of Theorem \ref{thm_mixing}
\begin{equation}\label{leq_id}
SBf_i-B\lambda\preceq 0 ,
\end{equation}
and for the elements of $SBf_i$ and $B\lambda$ we have:
\begin{align}
&(SBf_i)_X = (SB)_{X,i} = \sum_{Y\in\left\{0,1,2\right\}^n} S_{X,Y}B_{Y,i}\\
 &= \begin{cases} \mathbb{P}\left\{ Y_i=2 \mid X\right\},& i\in\left\{1,2,\dots,n\right\}\\ \mathbb{P}\left\{ Y_{i-n}=1 \mid X\right\},& i\in\left\{n+1,n+2,\dots,2n\right\}\end{cases}\\
 &= \begin{cases}
(1-\beta)^{m_i}\theta,& \text{if } i\in\left\{1,2,\dots,n\right\} \text{ and } X_i=0\\
\delta,& \text{if } i\in\left\{1,2,\dots,n\right\} \text{ and } X_i=1\\
1-\gamma,& \text{if } i\in\left\{1,2,\dots,n\right\} \text{ and } X_i=2\\
1-(1-\beta)^{m_{i-n}},& \text{if } i\in\left\{n+1,\dots,2n\right\} \text{ and } X_{i-n}=0\\
1-\delta,& \text{if } i\in\left\{n+1,\dots,2n\right\} \text{ and } X_{i-n}=1\\ \label{SBf_i_id}
0,& \text{if } i\in\left\{n+1,\dots,2n\right\} \text{ and } X_{i-n}=2\\
\end{cases}\\ \label{Bl_id}
&(B\lambda)_X =\lambda_0 + \sum\limits_{k=1}^n B_{X,k}\lambda_k + \sum\limits_{k=n+1}^{2n} B_{X,k}\lambda_k .
\end{align}
As mentioned before, we are interested to evaluate $p_i(t+1)$ only for $i\in\left\{n+1,n+2,\dots,2n\right\}$. Since the corresponding terms in \eqref{SBf_i_id} (the lower three) do not depend on $\theta$, the equations for optimal Lagrange multipliers are the same as in Theorem \ref{thm_mixing}, which lead to
\begin{equation}
p_{I,i}(t+1) \leq (1-\delta)p_{I,i}(t) + \beta\sum\limits_{j\in N_i}p_{I,j}(t) .
\end{equation}
and consequently
\begin{equation}
p_I(t+1)\preceq ((1-\delta)I_n+\beta A) p_I(t) .
\end{equation}

Now for any $t< t_{mix}(\epsilon)$:
\begin{align}
\epsilon &< \mathbb{P}\left\{ \substack{\text{some nodes are infected at time $t$} \mid \\\text{all nodes were infected at time $0$}} \right\}\\
 &\leq \sum\limits_{i=1}^n p_{I,i}(t) = 1_{n}^T p_I(t)\\
 &\leq 1_{n}^T ((1-\delta)I_n+\beta A)^t p_I(0)\\
 &\leq 1_{n}^T ((1-\delta)I_n+\beta A)^t 1_{n}\\
 &\leq \|1_{n}\|^2 \|(1-\delta)I_n+\beta A\|^t\\
 &=n\|(1-\delta)I_n+\beta A\|^t .
\end{align}
$\|(1-\delta)I_n+\beta A\|<1$ leads to the fact that $t< \frac{\log \frac{n}{\epsilon}}{-\log \|(1-\delta)I_n+\beta A\|}$ for all $t< t_{mix}(\epsilon)$. Therefore $t_{mix}(\epsilon)\leq \frac{\log \frac{n}{\epsilon}}{-\log \|(1-\delta)I_n+\beta A\|}$, which means the mixing time is $O(\log n)$.
\end{proof}

\subsection{Vaccination-Dominant Model}

In this variation of the model the assumption is if a susceptible node receives both infection and vaccine at the same time, it becomes vaccinated. The transition probabilities of the Markov chain are again
\begin{align}\label{MC1_vd}
S_{X,Y}&= \mathbb{P}\left\{\xi(t+1)=Y \mid \xi(t)= X\right\}\notag\\
&= \prod_{i=1}^n \mathbb{P}\left\{\xi_i(t+1)=Y_i \mid \xi(t) = X\right\} ,
\end{align}
with the change that
\begin{multline}\label{MC2_vd}
\mathbb{P}\left\{\xi_i(t+1)=Y_i \mid \xi(t) = X\right\}=\\
\begin{cases}
(1-\beta)^{m_i}(1-\theta),& \text{if } (X_i,Y_i)=(0,0)\\
(1-(1-\beta)^{m_i})(1-\theta),& \text{if } (X_i,Y_i)=(0,1)\\
\theta,& \text{if } (X_i,Y_i)=(0,2)\\
0,& \text{if } (X_i,Y_i)=(1,0)\\
1-\delta,& \text{if } (X_i,Y_i)=(1,1)\\
\delta,& \text{if } (X_i,Y_i)=(1,2)\\
\gamma,& \text{if } (X_i,Y_i)=(2,0)\\
0,& \text{if } (X_i,Y_i)=(2,1)\\
1-\gamma,& \text{if } (X_i,Y_i)=(2,2)\\
\end{cases} ,
\end{multline}
and $m_i=\left\vert{\left\{ {j\in N_i} \mid X_j=1\right\}}\right\vert=\left\vert{N_i\cap I(t)}\right\vert$ as before.

In this case the marginal probabilities are:
\begin{empheq}[box=\fbox]{align}
p_{R,i}&(t+1) =(1-\gamma)p_{R,i}(t)+\delta p_{I,i}(t)+\notag\\
& \theta(1-p_{R,i}(t)-p_{I,i}(t)) ,\label{exact_R_vd}\\
p_{I,i}&(t+1) =(1-\delta)p_{I,i}(t)+\notag\\
&(1-\theta)(1-(1-\beta)^{m_i})(1-p_{R,i}(t)-p_{I,i}(t))\label{exact_I_vd}
\end{empheq}

The nonlinear map, or the mean-field approximation, can be stated as:
\begin{empheq}[box=\fbox]{align}
&P_{R,i}(t+1) =(1-\gamma)P_{R,i}(t)+\delta P_{I,i}(t)\notag\\
&+ \theta(1-P_{R,i}(t)-P_{I,i}(t)), \label{nonlinear_R_vd}\\
&P_{I,i}(t+1) =(1-\delta)P_{I,i}(t)+(1-\theta)\notag\\
&\cdot(1-\prod_{j\in N_i} (1-\beta P_{I,j}(t)))(1-P_{R,i}(t)-P_{I,i}(t)) \label{nonlinear_I_vd}
\end{empheq}

As a result, the first order (linear) model is:
\begin{align}
\tilde{P}_{R,i}(t+1) = & P_R^* + (1-\gamma-\theta)(\tilde{P}_{R,i}(t)-P_R^*)\notag\\
&+(\delta-\theta)\tilde{P}_{I,i}(t) -\theta P_S^*\beta\sum\limits_{j\in N_i} \tilde{P}_{I,j} ,\\
\tilde{P}_{I,i}(t+1) = & (1-\delta)\tilde{P}_{I,i}(t)+(1-\theta) P_S^*\beta\sum\limits_{j\in N_i} \tilde{P}_{I,j} ,
\end{align}
or the following matrix form:
\begin{empheq}[box=\fbox]{align}
&\begin{bmatrix}\tilde{P}_R(t+1)\\\tilde{P}_I(t+1)\end{bmatrix}=\begin{bmatrix} P_R^* 1_n\\ 0_n\end{bmatrix} + M'' \begin{bmatrix}\tilde{P}_R(t)-P_R^* 1_n\\\tilde{P}_I(t) - 0_n\end{bmatrix} ,\\
&\text{where}\notag\\
&M''=\begin{bmatrix}
(1-\gamma-\theta)I_n & (\delta-\theta) I_n - \theta  P_S^* \beta A\\
0_{n\times n} & (1-\delta)I_n+(1-\theta) P_S^*\beta A
\end{bmatrix} .
\end{empheq}
We should note that for the vaccination-dominant model, the steady state of the Markov chain and the main fixed point of the mapping are exactly the same as in  the infection-dominant model. However, as we may expect, the vaccination-dominant model is more stable.

\subsubsection{Stability of The Fixed Point}
The stability of the vaccination-dominant model can be summarized in the following theorem.
\begin{proposition}
The main fixed point of the nonlinear map (\ref{nonlinear_R_vd}, \ref{nonlinear_I_vd}) is
\begin{enumerate}[label=\alph*)]
\item locally stable, if $(1-\theta)\frac{\gamma}{\gamma+\theta}\frac{\beta}{\delta}\lambda_{max}(A)<1$, and
\item globally stable, if $(1-\theta)\frac{\beta}{\delta}\lambda_{max}(A)<1$ .
\end{enumerate}
\end{proposition}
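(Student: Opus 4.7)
The plan is to mirror the proof of the preceding (infection-dominant) proposition, exploiting the block upper-triangular structure of the Jacobian $M''$ for part (a) and a linear-comparison argument for part (b).

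For part (a), I would note that $M''$ is block upper triangular, so its spectrum is the union of the spectra of $(1-\gamma-\theta)I_n$ and of $(1-\delta)I_n+(1-\theta)P_S^*\beta A$. The eigenvalue $1-\gamma-\theta$ has modulus less than one in the usual parameter range $0<\gamma+\theta<2$, while the largest eigenvalue of the $(2,2)$-block equals $(1-\delta)+(1-\theta)P_S^*\beta\lambda_{\max}(A)$ with $P_S^*=\gamma/(\gamma+\theta)$. Requiring this to be strictly less than one yields exactly the claimed threshold $(1-\theta)\frac{\gamma}{\gamma+\theta}\frac{\beta}{\delta}\lambda_{\max}(A)<1$, so $\rho(M'')<1$ and the fixed point is locally asymptotically stable.

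For part (b), the critical step is the elementwise upper bound
\begin{equation*}
P_I(t+1)\preceq \bigl((1-\delta)I_n+(1-\theta)\beta A\bigr)P_I(t),
\end{equation*}
which follows from $1-\prod_{j\in N_i}(1-\beta P_{I,j})\leq \beta\sum_{j\in N_i}P_{I,j}$ together with $1-P_{R,i}-P_{I,i}\leq 1$; the extra $(1-\theta)$ factor, which is precisely the reason the vaccination-dominant threshold is better than in the infection-dominant case, comes directly from \eqref{nonlinear_I_vd}. Under the hypothesis $(1-\theta)\beta\lambda_{\max}(A)<\delta$, the symmetric comparison matrix has spectral radius (hence operator norm) strictly less than one, so $P_I(t)\to 0_n$ geometrically from every initial condition in $[0,1]^{2n}$.

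To close the loop and conclude \emph{global} stability, I would then view the $P_R$ recursion
\begin{equation*}
P_{R,i}(t+1)=(1-\gamma-\theta)P_{R,i}(t)+\theta+(\delta-\theta)P_{I,i}(t)
\end{equation*}
as an affine contraction with unique fixed point $\theta/(\gamma+\theta)=P_R^*$, driven by the vanishing input $(\delta-\theta)P_{I,i}(t)$; a standard telescoping/input-to-state estimate then forces $P_{R,i}(t)\to P_R^*$ as $t\to\infty$. The main obstacle I anticipate is precisely this staged argument: because $M''$ couples $P_R$ to $P_I$ off-diagonally, one cannot dominate the whole $2n$-dimensional system by a single Schur-stable linear map and conclude directly as in the non-vaccinating case; instead, one must first drive $P_I$ to zero and then feed that decay into the $P_R$ recursion as a perturbation.
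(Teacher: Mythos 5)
Your proposal is correct and follows essentially the same route as the paper: part (a) via the block upper-triangular spectrum of $M''$ with the $(2,2)$-block eigenvalue $(1-\delta)+(1-\theta)P_S^*\beta\lambda_{\max}(A)$, and part (b) via the elementwise bound $P_I(t+1)\preceq\bigl((1-\delta)I_n+(1-\theta)\beta A\bigr)P_I(t)$. Your only addition is the explicit final step driving $P_{R}(t)\to P_R^*$ through the affine recursion $P_{R,i}(t+1)=(1-\gamma-\theta)P_{R,i}(t)+\theta+(\delta-\theta)P_{I,i}(t)$ once $P_I\to 0$, which the paper leaves implicit but which is needed to genuinely conclude convergence to the nontrivial fixed point rather than just $P_I\to 0$.
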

\begin{proof}
The statement a is again clear since if the largest eigenvalue of $(1-\delta)I_n+(1-\theta) P_S^*\beta A$ is less than one, then the largest eigenvalue of $M''$ is less than $1$, which means the norm of the Jacobian matrix is less than $1$.

The statement b also follows from upperbounding Eq. \eqref{nonlinear_I_vd} as
\begin{equation}
P_{I,i}(t+1) \leq (1-\delta)P_{I,i}(t)+ (1-\theta)\beta\sum\limits_{j\in N_i}P_{I,j}\label{upperbound_2_vd} .
\end{equation}
\end{proof}

\begin{figure*}[!t]
\centering
\subfloat{\includegraphics[width=3in]{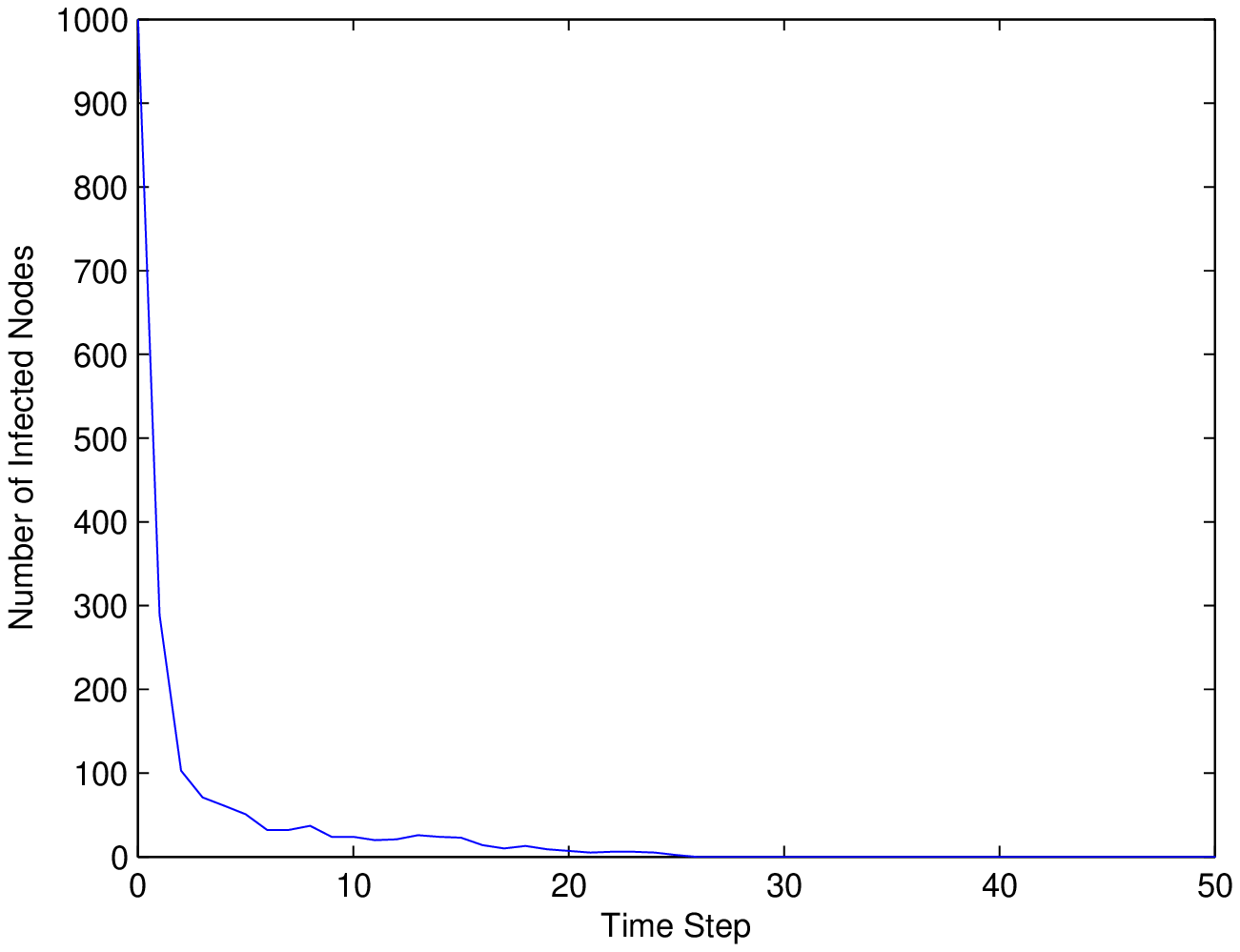}%
\label{plot1}}
\subfloat{\includegraphics[width=3in]{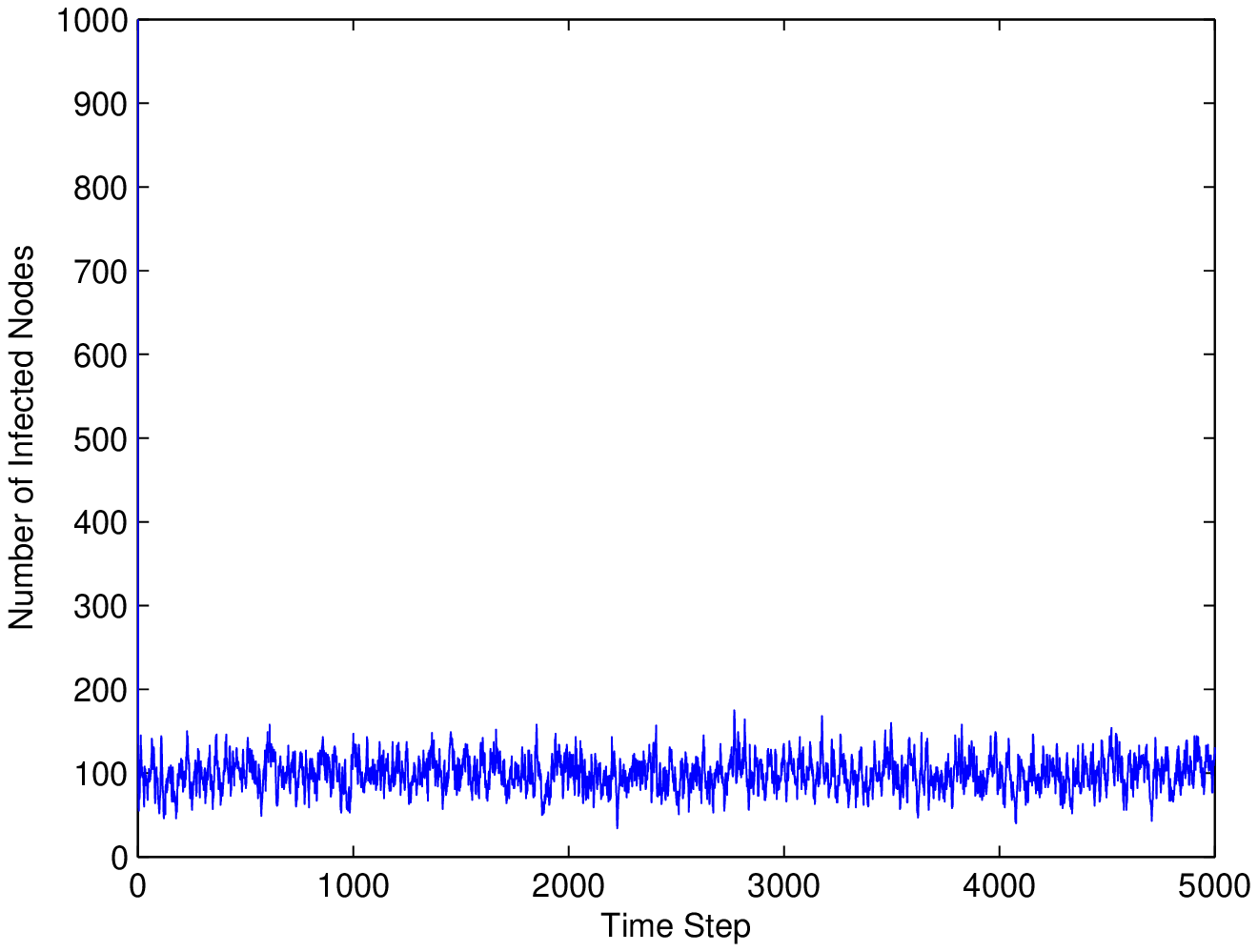}%
\label{plot2}}
\caption{The evolution of SIRS/SIV epidemics over an Erd\H{o}s-R\'enyi graph with $n=2000$ nodes and the same initial states. a) $\frac{\gamma}{\gamma+\theta}\frac{\beta\|A\|}{\delta}<1$: Fast eradication of the epidemic. b) $\frac{\gamma}{\gamma+\theta}\frac{\beta\|A\|}{\delta}>1$: Epidemic spread around the nontrivial fixed point (convergence is not observed.)}
\label{joint}
\end{figure*}

\subsubsection{Mixing Time of MC}
As shown above, the stability condition of the main fixed point (epidemic eradication) is relaxed by a factor of $(1-\theta)$ in the vaccination-dominant model. In this part, we show that the condition for the fast mixing time of the Markov chain is also relieved by the same factor.
\begin{theorem}
If $(1-\theta)\frac{\beta\lambda_{\max}(A)}{\delta}<1$, the mixing time of the Markov chain whose transition matrix $S$ is described by Eqs. \eqref{MC1_vd} and \eqref{MC2_vd} is $O(\log n)$.
\end{theorem}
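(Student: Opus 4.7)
The plan is to mirror the LP-duality argument used in Theorems \ref{thm_mixing} and \ref{thm_mixing_id}, adapted to the modified transition probabilities in \eqref{MC2_vd}. First I would set up the same minimax program
\begin{equation*}
\max_{\mu B = p^T,\, \mu\succeq 0} p_i(t+1) = \min_\lambda \max_{\mu\succeq 0} \mu(SBf_i - B\lambda) + p^T\lambda,
\end{equation*}
which again forces $SBf_i - B\lambda \preceq 0$. The only entries of $SBf_i$ that change relative to Theorem \ref{thm_mixing_id} are those for $i \in \{1,\dots,n\}$, while for $i \in \{n+1,\dots,2n\}$ the relevant entry becomes $(1-\theta)(1-(1-\beta)^{m_{i-n}})$ when $X_{i-n}=0$, and the other two cases ($X_{i-n}=1$ or $X_{i-n}=2$) are unchanged.

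Next I would propose the feasible dual point $\lambda_0^\ast = 0$, $\lambda_k^\ast = 0$ for $k \in \{1,\dots,n\}$, $\lambda_{n+i}^\ast = 1-\delta$, $\lambda_{n+j}^\ast = (1-\theta)\beta$ for $j \in N_i$, and $\lambda_{n+j}^\ast = 0$ for $j \notin N_i$. Feasibility needs to be verified in three cases exactly as in the earlier proofs: for $X_i = 0$ with $m$ infected neighbors, the inequality $(1-\theta)(1-(1-\beta)^m) \leq (1-\theta)m\beta$ holds by Bernoulli; for $X_i = 1$, the term $1-\delta$ is matched exactly; and for $X_i = 2$ the left-hand side is $0$. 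Plugging $\lambda^\ast$ back into the minimax yields
\begin{equation*}
p_{I,i}(t+1) \leq (1-\delta)p_{I,i}(t) + (1-\theta)\beta\sum_{j\in N_i} p_{I,j}(t),
\end{equation*}
and therefore $p_I(t+1) \preceq \bigl((1-\delta)I_n + (1-\theta)\beta A\bigr) p_I(t)$.

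Finally, I would repeat the mixing-time bound exactly as in Theorem \ref{thm_mixing_id}. Starting from the all-infected initial condition, the total-variation distance to the stationary distribution is bounded above by $\sum_i p_{I,i}(t)$, since the remaining mass already resides in the decoupled $\{S,R\}$-chain whose stationary distribution matches $\pi$ restricted to non-infected coordinates. Iterating the elementwise bound gives
\begin{equation*}
\epsilon < 1_n^T \bigl((1-\delta)I_n + (1-\theta)\beta A\bigr)^t 1_n \leq n\,\|(1-\delta)I_n + (1-\theta)\beta A\|^t,
\end{equation*}
and under the hypothesis $(1-\theta)\beta\lambda_{\max}(A) < \delta$ this norm is strictly less than one, yielding $t_{mix}(\epsilon) = O(\log n)$.

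The main obstacle I anticipate is the bookkeeping in the feasibility check: one must be careful that the factor $(1-\theta)$ attaches only to the infection term (not to the $1-\delta$ self-loop of the $I$ state), so the dual variable for the "self" coordinate $\lambda_{n+i}^\ast$ must remain $1-\delta$ rather than scaling with $\theta$; this is exactly what makes the new contraction operator $(1-\delta)I_n + (1-\theta)\beta A$ have spectral radius below one precisely in the regime of the theorem. A secondary subtlety, handled as in Theorem \ref{thm_mixing_id}, is justifying that once the infected coordinates vanish the residual $\{S,R\}$-dynamics on each node converge rapidly to the product stationary distribution, so that bounding $\sum_i p_{I,i}(t)$ suffices to control the total variation distance to $\pi$.
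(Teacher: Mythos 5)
Your proposal matches the paper's proof essentially step for step: the same LP-duality setup forcing $SBf_i - B\lambda \preceq 0$, the same feasible dual point with $\lambda_{n+i}^\ast = 1-\delta$ and $\lambda_{n+j}^\ast = (1-\theta)\beta$ for $j \in N_i$, the same three-case feasibility check yielding $p_I(t+1) \preceq \bigl((1-\delta)I_n + (1-\theta)\beta A\bigr)p_I(t)$, and the same final bound $\epsilon < n\|(1-\delta)I_n + (1-\theta)\beta A\|^t$ giving $t_{mix}(\epsilon) = O(\log n)$. The concluding total-variation step is treated at the same level of detail as in the paper's Theorem \ref{thm_mixing_id}, so the argument is correct and not a different route.
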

\begin{proof}
We use the same linear programming argument as in the proofs of Theorems \ref{thm_mixing} and \ref{thm_mixing_id}, and show that for each $i\in\left\{n+1,n+2,\dots,2n\right\}$, we have $p_i(t+1) \leq (1-\delta)p_i(t) + (1-\theta)\beta\sum\limits_{j\in N_i}p_j(t)$.
\begin{align}
&(SBf_i)_X = \begin{cases} \mathbb{P}\left\{ Y_i=2 \mid X\right\},& i\in\left\{1,2,\dots,n\right\}\\ \mathbb{P}\left\{ Y_{i-n}=1 \mid X\right\},& i\in\left\{n+1,\dots,2n\right\}\end{cases}=\notag\\
 & \begin{cases}
\theta,& \text{if } i\in\left\{1,2,\dots,n\right\} \text{ and } X_i=0\\
\delta,& \text{if } i\in\left\{1,2,\dots,n\right\} \text{ and } X_i=1\\
1-\gamma,& \text{if } i\in\left\{1,2,\dots,n\right\} \text{ and } X_i=2\\
\!\begin{aligned}(1-\theta)(1-(1\ \ \\-\beta)^{m_{i-n}}),\end{aligned}& \text{if } i\in\left\{n+1,\dots,2n\right\} \text{ and } X_{i-n}=0\\
1-\delta,& \text{if } i\in\left\{n+1,\dots,2n\right\} \text{ and } X_{i-n}=1\\ \label{SBf_i_vd}
0,& \text{if } i\in\left\{n+1,\dots,2n\right\} \text{ and } X_{i-n}=2\\
\end{cases}\\ \label{Bl_vd}
&(B\lambda)_X =\lambda_0 + \sum\limits_{k=1}^n B_{X,k}\lambda_k + \sum\limits_{k=n+1}^{2n} B_{X,k}\lambda_k .
\end{align}
We claim that the Lagrange multiplier vector $\lambda^*=[\lambda_0^*,\lambda_1^*,\dots,\lambda_{2n}^*]^T$ with the following values is in the feasible set:
\begin{equation}
\begin{cases}
\lambda_0^*=0\\
\lambda_1^*=\dots=\lambda_n^*=0\\
\lambda_{n+i}^*=1-\delta\\
\lambda_{n+j}=\beta(1-\theta) \text{  for } j\in N_i\\
\lambda_{n+j}=0 \text{  for } j\not\in N_i
\end{cases}
\end{equation}
Verification of the claim for all possible cases is as follows.\\
For $X_i=0, \left\vert{N_i\cap I(t)}\right\vert=m$ :
\begin{multline}
\mathbb{P}\left\{ Y_i=1 \mid X\right\}=(1-\theta)(1-(1-\beta)^m) \leq\\
m\beta(1-\theta) = \lambda_0^* + \sum\limits_{k=1}^n B_{X,k}\lambda_k^* + \sum\limits_{k=n+1}^{2n} B_{X,k}\lambda_k^* .
\end{multline}
For $X_i=1, \left\vert{N_i\cap I(t)}\right\vert=m$ :
\begin{multline}
\mathbb{P}\left\{ Y_i=1 \mid X\right\}=1-\delta \leq 1-\delta+m\beta(1-\theta) \\
= \lambda_0^* + \sum\limits_{k=1}^n B_{X,k}\lambda_k^* + \sum\limits_{k=n+1}^{2n} B_{X,k}\lambda_k^* .
\end{multline}
For $X_i=2, \left\vert{N_i\cap I(t)}\right\vert=m$ :
\begin{multline}
\mathbb{P}\left\{ Y_i=1 \mid X\right\}=0 \leq\\
m\beta(1-\theta) = \lambda_0^* + \sum\limits_{k=1}^n B_{X,k}\lambda_k^* + \sum\limits_{k=n+1}^{2n} B_{X,k}\lambda_k^* .
\end{multline}
It follows that
\begin{align}
\max_{\mu B=p^T, \mu\succeq 0} &p_i(t+1)= \min_\lambda \max_{\mu\succeq 0} \mu(SBf_i-B\lambda) +p^T\lambda\notag\\
 &\leq p^T\lambda^* = (1-\delta)p_i +\beta(1-\theta)\sum\limits_{j\in N_i}p_j ,
\end{align}
which proves
\begin{equation}
p_I(t+1)\preceq ((1-\delta)I_n+\beta(1-\theta) A) p_I(t) .
\end{equation}
Under the condition that $\frac{\beta(1-\theta)\lambda_{\max}(A)}{\delta}<1$, by the same argument as in the proof of Theorem \ref{thm_mixing_id}, $t_{mix}(\epsilon)\leq \frac{\log \frac{n}{\epsilon}}{-\log \|(1-\delta)I_n+\beta(1-\theta) A\|}$.
\end{proof}


\section{Experimental Results}

We show the simulation results on Erd\H{o}s-R\'enyi graphs, for the epidemic thresholds below and above $1$, and they confirm the theorems proved in the paper. As it can be seen in Fig. \ref{plot1}, for SIRS epidemics ($\theta=0$), when the condition $\frac{\beta\|A\|}{\delta}<1$ is satisfied the epidemic decays exponentially, and dies out quickly. In contrast when $\frac{\beta\|A\|}{\delta}>1$, the epidemic does not exhibit convergence to the disease-free state in any observable time. Fig. \ref{plot2} illustrates this phenomenon, and indicates that the epidemic keeps spreading around its nontrivial fixed point.

For the first SIV model (infection-dominant), we observe the same exponential decay (Fig. \ref{plot1}), when $\frac{\gamma}{\gamma+\theta}\frac{\beta\|A\|}{\delta}<1$, which means the vaccination indeed makes the system more stable. Furthermore, for the vaccination-dominant model, under $(1-\theta)\frac{\gamma}{\gamma+\theta}\frac{\beta\|A\|}{\delta}<1$, we observe the fast convergence again, which confirms that the system is even more stable in this case. As might have been speculated, for $\frac{\gamma}{\gamma+\theta}\frac{\beta}{\delta}\lambda_{max}(A)>1$ in the infection-dominant and $(1-\theta)\frac{\gamma}{\gamma+\theta}\frac{\beta}{\delta}\lambda_{max}(A)>1$ in the vaccination-dominant, we are not able to see epidemic eradication in any reasonable time, and we obtain similar plots as Fig. \ref{plot2}.

\section{Conclusions}

We studied the exact network-based Markov chain Model for the SIRS/SIV epidemics, and their celebrated mean-field approximation. We showed that the threshold conditions coincide for fast-mixing of the exact Markov chain and the stability of the mean-field approximation at the disease-free fixed point. Furthermore, we showed for above-threshold epidemics, that there exists a unique nontrivial fixed point corresponding to the endemic state. Interestingly, the simulations suggest that in the latter case, the underlying Markov chain model should also have an exponentially slow mixing time, which leads to the conjecture that the threshold condition is indeed tight.



\bibliographystyle{IEEEtran}
\bibliography{references}

\end{document}